\documentclass[a4paper,fleqn]{cas-dc}

\usepackage[numbers,sort&compress]{natbib}
\usepackage{amsmath, amssymb}
\usepackage[linesnumbered, ruled, vlined]{algorithm2e}
\usepackage{graphicx}
\usepackage{subcaption}
\usepackage{placeins}  % 或不带[section]
\usepackage{xcolor}
\usepackage{booktabs}
\usepackage{tikz}
\usepackage{etoolbox}
\usepackage{xspace}
\usetikzlibrary{positioning,calc}

\def\tsc#1{\csdef{#1}{\textsc{\lowercase{#1}}\xspace}}
\tsc{WGM}
\tsc{QE}
\tsc{EP}
\tsc{PMS}
\tsc{BEC}
\tsc{DE}

\newtheorem{assumption}{Assumption}
\newtheorem{definition}{Definition}
\newtheorem{proposition}{Proposition}

\newtheorem{theorem}{Theorem}
\newtheorem{lemma}{Lemma}
\newdefinition{remark}{Remark}
\newproof{pf}{Proof}
\newproof{pot}{Proof of Theorem \ref{thm}}

\begin{document}
\let\WriteBookmarks\relax
\def\floatpagepagefraction{1}
\def\textpagefraction{.001}

\shorttitle{Data-driven online control for real-time optimal economic dispatch and temperature regulation in district heating systems}

% Short author
\shortauthors{Xinyi Yi and Ioannis Lestas}

\title[mode=title]{Data-driven online control for real-time optimal economic dispatch and temperature regulation in district heating systems}

\author[1]{Xinyi Yi}[orcid=0000-0003-1797-6280]
\ead{xy343@cam.ac.uk}

\author[1]{Ioannis Lestas}
\ead{icl20@cam.ac.uk}
\cormark[1]

\cortext[1]{Corresponding author.}

% Address/affiliation
\affiliation {organization={Department of Engineering, University of Cambridge},
    addressline={Trumpington Street}, 
    city={Cambridge},
    % citysep={}, % Uncomment if no comma needed between city and postcode
    postcode={CB2 1PZ}, 
    % state={},
    country={United Kingdom}}

\begin{abstract}
District heating systems (DHSs) require coordinated economic dispatch and temperature regulation under uncertain operating conditions. Existing DHS operation strategies often rely on disturbance forecasts and nominal models, so their economic and thermal performance may degrade when predictive information or model knowledge is inaccurate. This paper develops a data-driven online control framework for DHS operation by embedding steady-state economic optimality conditions into the temperature dynamics, so that the closed-loop system converges to the economically optimal operating point without relying on disturbance forecasts. Based on this formulation, we develop a Data-Enabled Policy Optimization (DeePO)-based online learning controller and incorporate Adaptive Moment Estimation (ADAM) to improve closed-loop performance. We further establish convergence and performance guarantees for the resulting closed-loop system. Simulations on an industrial-park DHS in Northern China show that the proposed method achieves stable near-optimal operation and strong empirical robustness to both static and time-varying model mismatch under practical disturbance conditions.
\end{abstract}

\begin{highlights}
\item Optimality conditions are embedded into augmented DHS dynamics.
\item A data-driven controller is developed for real-time DHS operation.
\item Adaptive online updates improve learning and closed-loop performance.
\item The framework guarantees controller optimality and closed-loop convergence.
\item Validation on an industrial-park DHS shows stable near-optimal operation.
\end{highlights}

\begin{keywords}
District heating systems \sep Economic dispatch \sep Temperature regulation \sep Online data-driven control\sep Performance guarantees
\end{keywords}
\maketitle

\section{Introduction}

Heating systems account for a significant share of global energy consumption and greenhouse gas emissions. Improving the operational efficiency and flexibility of district heating systems (DHSs) is therefore important for low-carbon energy transitions. Widely deployed in China, Russia, and Europe, DHSs distribute thermal energy through large-scale pipeline networks. Their increasing integration with renewable and waste-heat sources reduces reliance on fossil fuels, but also increases operational uncertainty and coordination complexity~\cite{liu2024diversifying}.

In practical DHS operation, coordinating economic dispatch and temperature regulation remains challenging under demand uncertainty and model mismatch. Existing forecast-based and model-based control strategies can perform well when disturbance predictions and nominal models are accurate, but their performance may degrade under uncertain heat demand and changing operating conditions. This motivates a closer examination of temperature regulation for DHSs under uncertainty.

Although temperature regulation has been widely studied in buildings~\cite{cholewa2022easy}, DHSs differ substantially in system structure, control objectives, and operating conditions. Compared with building heating control, DHS regulation requires coordinated heat generation, transport, and allocation over large-scale networks, where control performance depends on network interconnections and thermal transport dynamics~\cite{machado2022decentralized,ahmed2023control}. This motivates the development of control frameworks for DHSs that can maintain economically efficient and thermally stable operation under uncertainty.

\subsection{Forecast-reliant and model-based DHS operation}
In practice, DHS operation often combines forecast-based \textit{setpoint scheduling} with real-time \textit{setpoint tracking}\cite{machado2022decentralized,ahmed2023control}. However, forecast errors and model mismatch can degrade both thermal and economic performance under uncertain heat demand~\cite{frison2024evaluating,jansen2024effect}. Model predictive control (MPC) partially alleviates this issue through receding-horizon re-optimization~\cite{jansen2024mixed}. Related DHS operation frameworks have been developed using improved load forecasting methods~\cite{wei2022data,wei2024investigation} together with control-oriented or reduced-order thermal models~\cite{saloux2021modelbased,la2023optimal,zhang2023new} to improve thermal and economic performance. However, their practical performance remains sensitive to forecast quality and model accuracy, especially when heat demand varies and model mismatch is present.

\subsection{Data-driven enhancements within MPC-based DHS operation}
Building on these MPC frameworks, recent studies have further introduced data-driven components to mitigate the impact of model mismatch in DHS operation. However, rather than removing the reliance on model knowledge, these methods are still largely developed within MPC frameworks. For example, recent works have explored data-driven optimization frameworks within MPC~\cite{wei2022data} and advanced control schemes combining MPC with learning-based approaches~\cite{mugnini2022advanced}. Nevertheless, these approaches remain prediction-reliant and model-reliant: their performance still depends on the quality of the predictive model and its associated information, while formal closed-loop guarantees under prediction errors and model mismatch remain limited. These limitations motivate a shift from prediction-reliant MPC frameworks towards online data-driven control frameworks that can support economically consistent and reliable DHS operation under uncertainty while providing stronger closed-loop guarantees. Feedback-based optimization controllers~\cite{hauswirth2024optimization} provide an alternative, but their fast-optimization assumption may be restrictive for slow thermal dynamics. Our previous linear quadratic regulator (LQR) framework~\cite{yi2025optimal} guarantees convergence to the economically optimal operating point under unknown deterministic disturbances, but does not account for stochastic demand variations or renewable-heat fluctuations encountered in practical operation.

\subsection{Online data-driven control opportunities for real-time economic dispatch and temperature regulation in DHSs}
Data-driven LQR provides a promising framework for online DHS control when accurate system models are difficult to maintain and strong closed-loop guarantees are desired. Existing approaches broadly fall into two categories: indirect methods, which first identify system dynamics and then solve a certainty-equivalent (CE) control problem, and direct methods, which learn the control policy directly from data. For large-scale DHSs under uncertainty and model mismatch, direct methods are particularly attractive because they reduce reliance on repeated model identification and can adapt more naturally to closed-loop operating data.

Representative direct data-driven LQR methods include gradient-based policy updates~\cite{tu2019sample,mohammadi2021convergence} and Data-Enabled Policy Optimization (DeePO)~\cite{zhao2025data}. Gradient-based methods often require relatively long data trajectories to estimate gradients or value functions accurately, which can limit online sample efficiency. By contrast, DeePO constructs a covariance-based surrogate objective directly from closed-loop data, making it well suited to online adaptation from a single operating trajectory. This is particularly relevant for DHS operation, where repeated restarts, multiple independent experiments, or extensive exploration are typically impractical.

Despite these advantages, existing DeePO studies have mainly been validated on low-dimensional benchmark systems, and its online implementation can still be sensitive to stochastic disturbances, time-correlated data, and surrogate bias under model mismatch. To improve performance in this setting, we incorporate Adaptive Moment Estimation (ADAM)~\cite{kingma2014adam} into the DeePO update. Although ADAM is widely used to accelerate gradient-based learning~\cite{cui2025fully,kim2025neural}, its convergence properties for LQR policy learning do not follow directly from existing general-purpose results, because the cost depends implicitly on Lyapunov equations. This motivates an ADAM-enhanced DeePO framework for optimal real-time economic dispatch and temperature regulation in DHSs under demand uncertainty and model mismatch.

\subsection{Contributions and paper organization}
The main contributions of this work are threefold:
\begin{itemize}
    \item \textbf{Economically consistent online control formulation for DHS operation.}
We develop a data-driven online control framework for real-time optimal economic dispatch and temperature regulation in DHSs. Steady-state economic optimality conditions are embedded into the DHS temperature dynamics, yielding an augmented regulation problem whose closed-loop equilibrium coincides with the economically optimal operating point, without relying on disturbance forecasts for control design or real-time operation.

    \item \textbf{Online data-driven control with convergence and stability guarantees.}
    Based on Data-Enabled Policy Optimization (DeePO), we develop an online data-driven controller for DHSs under stochastic disturbances and model mismatch, and establish convergence to an optimal control policy, together with closed-loop stability guarantees.
    
    \item \textbf{ADAM-enhanced online policy learning with improved closed-loop performance.}
    We incorporate Adaptive Moment Estimation (ADAM) into the DeePO update to improve closed-loop performance, and establish convergence guarantees for the resulting ADAM-enhanced scheme in large-scale DHS control.
\end{itemize}

The remainder of the paper is organized as follows. Section~2 presents the DHS model. Sections~3 and~4 present a control-oriented formulation for DHS operation and the corresponding online data-driven controller design. Section~5 presents simulation results on an industrial-scale DHS with model mismatch and time-varying parameter perturbations. Section~6 concludes the paper.

\section{District heating system (DHS) model}
\subsection{Temperature dynamics of DHSs}
District heating system (DHS) temperature dynamics can be modeled at different fidelity levels. High-fidelity partial differential equation (PDE) models capture spatiotemporal heat transport in detail, but are often impractical for real-time optimization and learning-based control because of their high dimensionality and nonlinear structure~\cite{simonsson2021towards}. Reduced-order models that emphasize aggregate heat transport and node-level mixing are therefore more suitable for scalable network-level control~\cite{machado2022decentralized,la2023optimal,qin2024frequency,ahmed2023control}. Representative DHS pipeline models are summarized in Table~\ref{tab:comparison}. We therefore adopt a control-oriented DHS temperature model based on energy conservation and mass-flow mixing, which yields a low-dimensional state-space representation while preserving the network topology and interconnection structure.

\begin{table}[htbp]
\centering
\caption{Comparison of representative DHS pipeline models. Here, $T_k^i$ denotes the outlet temperature of the $i^{\text{th}}$ pipe segment at time step $k$, $\boldsymbol{T}^{\mathrm{in}}$ and $\boldsymbol{T}$ denote the inlet and outlet temperature vectors, respectively, $\boldsymbol{v}$ is the diagonal flow-velocity matrix, $T_a$ is the ambient temperature, and $\varrho$, $c_p$, $\boldsymbol{\lambda}_a$, $A_s$, and $\boldsymbol{V}$ denote the fluid density, specific heat capacity, heat-transfer coefficient, pipe cross-sectional area, and pipe volume, respectively. $f_k^i$ is the heat loss in the $i^{\text{th}}$ pipe segment at time step $k$.}
\begin{tabular}{p{2cm}|p{5cm}}
\toprule
\textbf{Reference} & \textbf{Model} \\ 
\midrule
\cite{simonsson2021towards}
& $\dfrac{T_{k+1}^{i}-T_k^i}{\Delta t}
=-v\dfrac{T_k^i-T_{k}^{i-1}}{\Delta x}
-\dfrac{f_k^i}{\varrho c_p A_s}$ \\
\midrule
\cite{ahmed2023control}
& $\boldsymbol{V}\dot{\boldsymbol{T}}
=\boldsymbol{v}(\boldsymbol{T}^{\mathrm{in}}-\boldsymbol{T})
+\boldsymbol{\lambda}_a(T_a-\boldsymbol{T})$ \\
\midrule
\cite{machado2022decentralized,la2023optimal,qin2024frequency}
& $\boldsymbol{V}\dot{\boldsymbol{T}}
=\boldsymbol{v}(\boldsymbol{T}^{\mathrm{in}}-\boldsymbol{T})$ \\
\bottomrule
\end{tabular}
\label{tab:comparison}
\end{table}

Let $\mathcal{H}^E$ denote the set of edges (heat exchangers and pipelines) and $\mathcal{H}^N$ the set of nodes (storage tanks).  
The edge dynamics in~(\ref{eq:2a}) capture the dependence of outlet temperatures on inlet--outlet temperature differences,  
while the node dynamics in~(\ref{eq:2b}) describe the balance between variations in stored thermal energy and the net inflow from adjacent edges, where $T_j^E$ is the outlet temperature of edge $j$ and $T_k^N$ is the temperature of node $k$:
\begin{subequations}\label{temdy}
\begin{align}
\varrho C_p V_j^E \dot{T}_j^E =& \varrho C_p q_j^E (T_k^N - T_j^E) + h_j^G - h_j^L, \notag\\
& j \in \mathcal{H}^E, k \in \mathcal{H}^N, \label{eq:2a} \\
\varrho C_p V_k^N \dot{T}_k^N =& \sum_{j \in \mathcal{T}_k} \varrho C_p q_j^E (T_j^E - T_k^N), 
k \in \mathcal{H}^N. \label{eq:2b}
\end{align}
\end{subequations}
$h_j^G$ and $h_j^L$ denote the heat-source and heat-load powers at edge $j$, respectively. $\mathcal{T}_k$ is the set of edges whose outlets are node $k$.
The parameters $V_j^E$ and $V_k^N$ represent edge and node volumes, $q_j^E$ is the mass-flow rate, and $C_p$ and $\varrho$ are the specific heat capacity and density of water. 

The edge set is partitioned into the heat producer set $\boldsymbol{\mathcal G}$, heat load set $\boldsymbol{\mathcal L}$, and pipeline set $\boldsymbol{\mathcal P}$. For an edge $j\in\boldsymbol{\mathcal G}$, we have $h_j^L = 0$. For an edge $j\in\boldsymbol{\mathcal L}$, we have $h_j^G = 0$ and $h_j^L$ is prescribed. For an edge $j\in\boldsymbol{\mathcal P}$, we have $h_j^G = 0$ and $h_j^L = 0$.

Equations~(\ref{eq:2a})–(\ref{eq:2b}) can be compactly expressed in matrix form as \footnote{$\boldsymbol{A_h}$ satisfies $\boldsymbol{A_h \mathbf{1} = 0}$, $\boldsymbol{\mathbf{1}^\top A_h = 0}$, and $\boldsymbol{A_h + A_h^\top \succeq 0}$ with a simple zero eigenvalue, hence it can be regarded as a Kirchhoff matrix of the heating network. Therefore, the null space of $\boldsymbol{A_h}$ and  $\boldsymbol{A_1}$ is $z\boldsymbol{1}$, where $z \in \mathbb{R}$.  }.
\begin{subequations}
\begin{align}
&\boldsymbol{V}
\begin{bmatrix}
\dot{\boldsymbol{T}}^G \\
\dot{\boldsymbol{T}}^{L}\\
\dot{\boldsymbol{T}}^N
\end{bmatrix}
= -\boldsymbol{A}_h 
\begin{bmatrix}
\boldsymbol{T}^G \\
\boldsymbol{T}^{L}\\
\boldsymbol{T}^N
\end{bmatrix}
+ 
\begin{bmatrix}
\boldsymbol{h^G} \\
-\boldsymbol{h}^{L} \\
\boldsymbol{0}
\end{bmatrix}\label{eq:3},\\
&\boldsymbol{V}
\begin{bmatrix}
\dot{\boldsymbol{T}}^E \\
\dot{\boldsymbol{T}}^N
\end{bmatrix}
= -\boldsymbol{A}_h 
\begin{bmatrix}
\boldsymbol{T}^E \\
\boldsymbol{T}^N
\end{bmatrix}
+ 
\begin{bmatrix}
\begin{bmatrix}
\boldsymbol{h^G} \\
-\boldsymbol{h}^{L}
\end{bmatrix}\\
\boldsymbol{0}
\end{bmatrix}
\label{eq:e3},\\
&\boldsymbol{\dot{T}}=\boldsymbol{-A_1T+B_1h^G-B_2 h^L},\label{origindy}
\end{align}
\end{subequations}
where $\boldsymbol{h^G}$ and $\boldsymbol{h^L}$ denote the production and load vectors, respectively, scaled by $\frac{1}{\varrho C_p}$.

$\small{\boldsymbol{A_1 = V^{-1}A_h},
\boldsymbol{B_1 = V^{-1}\begin{bmatrix} I \\ 0 \\ 0 \end{bmatrix}}, 
\boldsymbol{B_2 = V^{-1}\begin{bmatrix} 0 \\ I \\ 0 \end{bmatrix}}}$. $\boldsymbol{A_h}$ is defined as $\boldsymbol{A_h} =
\begin{bmatrix}
\operatorname{diag}(\boldsymbol{q^E}) & -\operatorname{diag}(\boldsymbol{q^E}) \boldsymbol{B_{sh}} \\
\boldsymbol{-B_{th}} \operatorname{diag}(\boldsymbol{q^E}) & \operatorname{diag}(\boldsymbol{B_{th} q^E})
\end{bmatrix}$.
$\boldsymbol{A_h}$ is a constant matrix for a given mass flow vector $q^E$, where  
$\boldsymbol{B_{th}} = \tfrac{1}{2}(|\boldsymbol{B_h}| + \boldsymbol{B_h})$, and 
$\boldsymbol{B_{sh}} = \tfrac{1}{2}(|\boldsymbol{B_h}| - \boldsymbol{B_h})$, 
where $\boldsymbol{B_h}$ is the incidence matrix of the DHS, and $|\boldsymbol{B_h}|$ denotes the elementwise absolute value of $\boldsymbol{B_h}$~\cite{qin2024frequency}.

\subsection{Steady-state optimization of DHSs}
Given the steady-state heat demand $\boldsymbol{\bar h}^L$, the steady-state DHS operation is characterized by the following two optimization problems. Problem \textbf{E1} determines the economically optimal heat generation, while problem \textbf{E2} selects the corresponding temperature profile by minimizing temperature deviation over the equilibrium set.
\begin{subequations}\label{opt1}
\begin{align}
\textbf{E1:} &\min_{\boldsymbol{h^G}\in\mathbb{R}^{|\mathcal G|}, \quad \boldsymbol{T}\in\mathbb{R}^{n_T}} \frac{1}{2} \boldsymbol{h^G}^{\top} \boldsymbol{F^G} \boldsymbol{h^G},\label{5a}\\
    &\text{s.t.} \boldsymbol{A_1 T} = \boldsymbol{B_1}\boldsymbol{h^G-B_2 \bar{h}^L},\label{5b}
\end{align}
\end{subequations}
where $\boldsymbol{F^G}=\mathrm{diag}(F_i^G)\succ\boldsymbol{0}$ collects the producer cost coefficients for $i\in\boldsymbol{\mathcal G}$. \textbf{E1} admits a unique optimizer $\boldsymbol{h}^{G\star}$. The associated equilibrium temperature set is
$\boldsymbol{\tilde T}
=
\boldsymbol{A}_1^\dagger(\boldsymbol{B}_1\boldsymbol{h}^{G\star}-\boldsymbol{B}_2\boldsymbol{\bar h}^L)
+z\boldsymbol{1}$, $z\in\mathbb R$,
where $\boldsymbol{A}_1^\dagger$ denotes the Moore--Penrose pseudoinverse. Problem \textbf{E2} then determines $\boldsymbol{T}^\star$ by minimizing the temperature-deviation cost over this set.
\begin{subequations}\label{economic2}
\begin{align}
\textbf{E2:} &\min_{z \in \mathbb{R}, \boldsymbol{T}\in\mathbb{R}^{n_T}} \frac{1}{2} \boldsymbol{T}^\top \boldsymbol{F^D} \boldsymbol{T},\label{4a}\\
&s.t. \boldsymbol{T} = \boldsymbol{A_1^\dagger} (\boldsymbol{B_1} \boldsymbol{{h^G}^\star} - \boldsymbol{B_2\bar{h}^L}) + z\boldsymbol{1},\label{4b}
\end{align}
\end{subequations}
where $\boldsymbol{F^D}=diag (F^D_{i})\succ\boldsymbol{0}$, and $F^D_{i}$ represents the temperature deviation cost coefficient at node $i$.

The following result characterizes the steady-state optimality conditions that link the DHS equilibrium to problems \textbf{E1} and \textbf{E2}.

\begin{theorem}\label{optimalitycondition}
If the DHS (\ref{origindy}) achieves an equilibrium at $\boldsymbol{{T^\star}}$ and $\boldsymbol{h^G}^\star$, and satisfies $\boldsymbol{F^M {h^G}^\star} = \boldsymbol{0}$ and $\boldsymbol{1^\top F^D {T^\star}} = 0$, where $\boldsymbol{F^M}\in\mathbb{R}^{(|\mathcal G|-1)\times |\mathcal G|}$ is defined by
\[
\boldsymbol{F^M}=
\begin{bmatrix}
F_1^G & -F_2^G & 0 & \cdots & 0\\
0 & F_2^G & -F_3^G & \cdots & 0\\
\vdots & \vdots & \ddots & \ddots & \vdots\\
0 & 0 & \cdots & F_{|\mathcal G|-1}^G & -F_{|\mathcal G|}^G
\end{bmatrix}.
\]
Then it uniquely solves the optimization problems \textbf{E1} and \textbf{E2}.
\end{theorem}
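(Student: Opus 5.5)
We verify the KKT conditions of the two convex problems \textbf{E1} and \textbf{E2} and then use strict convexity to obtain uniqueness.

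\textbf{Step 1: E1.} Problem \textbf{E1} is a convex QP: the objective is strictly convex in $\boldsymbol{h^G}$ since $\boldsymbol{F^G}\succ\boldsymbol{0}$, and the constraint (\ref{5b}) is linear. KKT conditions are therefore necessary and sufficient. With multiplier $\boldsymbol{\mu}$ for (\ref{5b}), stationarity reads
\[
\boldsymbol{F^G h^G}=\boldsymbol{B_1}^\top\boldsymbol{\mu},\qquad \boldsymbol{A_1}^\top\boldsymbol{\mu}=\boldsymbol{0}.
\]
By the footnote, $\mathbf{1}^\top\boldsymbol{A_h}=\boldsymbol{0}$ and $\boldsymbol{A_h}$ has a one-dimensional kernel. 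Hence $\ker\boldsymbol{A_1}^\top=\ker(\boldsymbol{A_h}^\top\boldsymbol{V}^{-1})$ is spanned by $\boldsymbol{V}\mathbf{1}$, so $\boldsymbol{\mu}=c\boldsymbol{V}\mathbf{1}$. Then $\boldsymbol{B_1}^\top\boldsymbol{\mu}=c[\,I\;0\;0\,]\mathbf{1}=c\mathbf{1}$, and stationarity becomes $\boldsymbol{F^G h^G}=c\mathbf{1}$, i.e.\ equal marginal costs $F_i^G h_i^G$. This is exactly $\boldsymbol{F^M h^G}=\boldsymbol{0}$, since $\ker\boldsymbol{F^M}$ is spanned by $(\boldsymbol{F^G})^{-1}\mathbf{1}$ and $\boldsymbol{F^M}$ has full row rank $|\mathcal G|-1$. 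Primal feasibility is the equilibrium condition of (\ref{origindy}): $\boldsymbol{A_1T^\star}=\boldsymbol{B_1h^{G\star}}-\boldsymbol{B_2\bar h^L}$. So $(\boldsymbol{h^{G\star}},\boldsymbol{T^\star})$ satisfies the KKT conditions and solves \textbf{E1}. Strict convexity in $\boldsymbol{h^G}$ makes $\boldsymbol{h^{G\star}}$ unique. The equilibrium also requires $\mathbf{1}^\top\boldsymbol{V}(\boldsymbol{B_1h^G}-\boldsymbol{B_2\bar h^L})=\mathbf{1}^\top\boldsymbol{h^G}-\mathbf{1}^\top\boldsymbol{\bar h^L}=0$, which pins the constant $c$.

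\textbf{Step 2: E2.} Since $\boldsymbol{A_1T^\star}=\boldsymbol{B_1h^{G\star}}-\boldsymbol{B_2\bar h^L}$, the vector $\boldsymbol{T^\star}$ lies in the affine set $\boldsymbol{A_1}^\dagger(\cdot)+\ker\boldsymbol{A_1}$, with $\ker\boldsymbol{A_1}=\mathrm{span}\{\mathbf{1}\}$ by the footnote. Thus $\boldsymbol{T^\star}$ is feasible for (\ref{4b}), i.e.\ $\boldsymbol{T^\star}=\boldsymbol{T}_0+z^\star\mathbf{1}$. To apply the pseudoinverse representation I would note that the right-hand side lies in $\mathrm{range}(\boldsymbol{A_1})$, which is what the equilibrium assumption guarantees. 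Restricted to this line, the objective is the scalar function $\phi(z)=\tfrac12(\boldsymbol{T}_0+z\mathbf{1})^\top\boldsymbol{F^D}(\boldsymbol{T}_0+z\mathbf{1})$. It is strictly convex, since $\phi''(z)=\mathbf{1}^\top\boldsymbol{F^D}\mathbf{1}>0$, and $\phi'(z^\star)=\mathbf{1}^\top\boldsymbol{F^D T^\star}=0$ by hypothesis. Hence $z^\star$, and therefore $\boldsymbol{T^\star}$, is the unique minimizer of \textbf{E2}.

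The main obstacle is the kernel argument in Step 1. It needs $\ker\boldsymbol{A_1}^\top=\mathrm{span}\{\boldsymbol{V}\mathbf{1}\}$, obtained from the left null vector $\mathbf{1}$ of $\boldsymbol{A_h}$ and the simple zero eigenvalue, and the check that $\boldsymbol{B_1}^\top\boldsymbol{V}\mathbf{1}=\mathbf{1}$, which uses the block structure of $\boldsymbol{B_1}$. Once this holds, the equivalence between the multiplier condition and $\boldsymbol{F^M h^{G\star}}=\boldsymbol{0}$ follows directly from the bidiagonal structure of $\boldsymbol{F^M}$.
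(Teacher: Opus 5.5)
Your proposal is correct and follows essentially the same route as the paper. For \textbf{E1}, the KKT conditions reduce to equal marginal costs $\boldsymbol{F^G}\boldsymbol{h}^{G\star}=c\mathbf{1}$, equivalently $\boldsymbol{F^M}\boldsymbol{h}^{G\star}=\boldsymbol{0}$. For \textbf{E2}, the first-order condition in $z$ is exactly $\mathbf{1}^\top\boldsymbol{F^D}\boldsymbol{T}^\star=0$, and convexity gives uniqueness.

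Your multiplier $\boldsymbol{\mu}=c\boldsymbol{V}\mathbf{1}$, which spans $\ker\boldsymbol{A_1}^\top$, is more careful than the paper's $\boldsymbol{\mu}=z\mathbf{1}$, which lies in $\ker\boldsymbol{A_1}^\top$ only when all entries of $\boldsymbol{V}$ coincide. Your explicit check, via $\operatorname{range}(\boldsymbol{A_1})$, that $\boldsymbol{T}^\star$ is feasible for \eqref{4b} fills in a step the paper leaves implicit.
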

The proof is deferred to Appendix~\ref{app:optimalitycondition}.

\section{Control-oriented formulation for DHS operation}

In this paper, $\|\cdot\|$ denotes the Euclidean norm, $\|\cdot\|_2$ the induced $L_2$ norm, $\|\cdot\|_F$ the Frobenius norm, $\|\cdot\|_\star$ the nuclear norm, and $\|\cdot\|_{\mathcal H_2}$ the $\mathcal H_2$ norm. The notation $\langle A,B\rangle:=\mathrm{Tr}(A^\top B)$ denotes the Frobenius inner product.

\subsection{Discrete-time temperature dynamics}
To facilitate controller design and simulation, we discretize the continuous-time dynamics in~\eqref{origindy}. For simplicity, we use Euler discretization with sampling interval $\tau$. The controller developed in the remainder of the paper is not restricted to this choice and can be applied with any discretization method that yields discrete-time dynamics with a constant sampling interval. The resulting discrete-time DHS model is
\begin{equation}\label{disheat}
\begin{aligned}
\boldsymbol{T}_{k+1}
=&\,
(\boldsymbol{I}-\tau\boldsymbol{A}_1)\boldsymbol{T}_k
+\tau\boldsymbol{B}_1\boldsymbol{h}^G_k
-\tau\boldsymbol{B}_2(\boldsymbol{h}^{L,\mathrm{con}}+\boldsymbol{h}^{L,\mathrm{sto}}_k)\\
=&\,
\boldsymbol{A}_T\boldsymbol{T}_k
+\boldsymbol{B}_T\boldsymbol{h}^G_k
+\boldsymbol{B}_T^L\boldsymbol{h}^{L,\mathrm{con}}
+\boldsymbol{B}_T^L\boldsymbol{h}^{L,\mathrm{sto}}_k,
\end{aligned}
\end{equation}
where the heat load $\boldsymbol{h}^L$ is decomposed into a slow-varying component $\boldsymbol{h}^{L,\mathrm{con}}$ and a fast-varying stochastic component $\boldsymbol{h}^{L,\mathrm{sto}}_k$.

\subsection{Output and error definition}
We aim to design a temperature regulator that ensures convergence to the optimal equilibrium point $(\boldsymbol{T^\star}, \boldsymbol{h^{G\star}})$ defined by the solutions of \textbf{E1} and \textbf{E2}. To achieve this, we introduce an error signal whose convergence to zero guarantees satisfaction of the corresponding optimality conditions. Specifically, the error definition is constructed directly from the \textbf{E1–E2} optimality conditions in \textbf{Theorem~\ref{optimalitycondition}}, with $\boldsymbol{e_k}$ having the same dimension as $\boldsymbol{h^G_k}$:
\begin{equation}\label{ll13}
\boldsymbol{e_k} = \begin{bmatrix} \boldsymbol{0} \\ \boldsymbol{1^\top F^D} \end{bmatrix}\boldsymbol{T_k}+\begin{bmatrix} \boldsymbol{F^M} \\ \boldsymbol{0} \end{bmatrix} \boldsymbol{h^G_k}=\boldsymbol{C_TT_k+D_Th^G_k}.
\end{equation}

\subsection{Augmented dynamics}
To achieve both temperature regulation and economically consistent steady-state operation, we design the controller so that the optimality error satisfies $\mathbb{E}[\boldsymbol{e}_k]\to\boldsymbol{0}$ and the heat-generation increment satisfies $\mathbb{E}[\boldsymbol{h}_k^G-\boldsymbol{h}_{k-1}^G]\to\boldsymbol{0}$ at equilibrium in expectation. This motivates the augmented state:
$\boldsymbol{x_{k+1}}=\begin{bmatrix}
    \boldsymbol{T_{k+1}-T_k}\\\boldsymbol{e_k}
\end{bmatrix}$, the resulting augmented dynamics are described as follows,
\begin{subequations}\label{augmentd}
\begin{align}
\boldsymbol{x_{k+1}}=&\begin{bmatrix}
    \boldsymbol{A_T}&\boldsymbol{0}\\
    \boldsymbol{C_T}&\boldsymbol{I}
\end{bmatrix}\boldsymbol{x_k}+\begin{bmatrix}
    \boldsymbol{B_T}\\\boldsymbol{D_T}
\end{bmatrix}\boldsymbol{u_k}+\begin{bmatrix}
    \boldsymbol{B_T^L}\\\boldsymbol{0}
\end{bmatrix}\boldsymbol{w_k}\label{9a}\\
=&\boldsymbol{A x_k+B  u_k+B_w w_k}\notag\\
\boldsymbol{u_k}=&\boldsymbol{h^G_k-h^G_{k-1}},\boldsymbol{w_k}=\boldsymbol{h}^{L,\mathrm{sto}}_k-\boldsymbol{h}^{L,\mathrm{sto}}_{k-1},\\
\boldsymbol{e_k}=&\begin{bmatrix}
    \boldsymbol{C_T}&\boldsymbol{I}
\end{bmatrix}\boldsymbol{x_k+D_Tu_k}\\\notag
=&\boldsymbol{Cx_k+D u_k}.
\end{align}
\end{subequations} 

\begin{assumption}\label{ass:aug_cont} The pair $(A,B)$ in \eqref{augmentd} is controllable. \end{assumption}

\begin{assumption}\textbf{(Bounded disturbances with finite covariance)}
\label{ass:noise_bounded}
The disturbance sequence $\{\boldsymbol{w}_k\}$ is zero-mean and uniformly bounded.
Specifically, there exists a constant $\delta>0$ such that
$\|\boldsymbol{w}_k\|\le\delta$ for all $k$.
Moreover, the disturbance has a finite, time-invariant covariance matrix, i.e.,
$\mathbb{E}[\boldsymbol{w}_k]=\boldsymbol{0},
\mathbb{E}[\boldsymbol{w}_k\boldsymbol{w}_k^\top]
=\boldsymbol{U}^{\mathrm{dis}}\succ\boldsymbol{0},
$
where $\boldsymbol{U}^{\mathrm{dis}}$ is finite.
\end{assumption}

\begin{definition}
    The augmented system (\ref{augmentd}) has an input dimension of $m=n_{h^G_k}$ and a state dimension of $n=n_{T_k}+n_{h^G_k}$.
\end{definition}

\begin{proposition}\textbf{(Convergence to the optimal operating point under a stabilizing feedback law)}
\label{prop:mean_secondmoment}
Consider the original DHS (\ref{origindy}) and its augmented representation in~(\ref{augmentd}) under the state-feedback law
$\boldsymbol{u}_k = \boldsymbol{K}\boldsymbol{x}_k$,
where $\boldsymbol{K}$ is \emph{stabilizing}.
If
$\mathbb{E}[\boldsymbol{x}_k]\to\boldsymbol{0}$,
$\mathbb{E}[\boldsymbol{u}_k]\to\boldsymbol{0}$, and
$\mathbb{E}[\boldsymbol{e}_k]\to\boldsymbol{0}$,
then the original DHS converges in expectation to its optimal equilibrium
$\bigl(\boldsymbol{T^\star}, \boldsymbol{h^{G\star}}\bigr)$ defined in~(\ref{opt1},\ref{economic2}).
Moreover, under \textbf{Assumption~\ref{ass:noise_bounded}}, the closed-loop state is mean-square bounded, i.e.,
$\sup_k \mathbb{E}\|\boldsymbol{x}_k\|^2<\infty$; equivalently,
$\mathbb{E}[\boldsymbol{x}_k\boldsymbol{x}_k^\top]$ converges to a unique stationary covariance that scales with the disturbance covariance level.
\end{proposition}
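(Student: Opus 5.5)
The argument has two parts. The first recovers the original variables from the augmented ones and invokes Theorem~\ref{optimalitycondition}. The second is a standard second-moment argument for a Schur-stable linear system driven by bounded noise.

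\textbf{Part 1: convergence in expectation.} Taking expectations in \eqref{augmentd} and using $\mathbb{E}[\boldsymbol{w}_k]=\boldsymbol{0}$ gives
\[
\mathbb{E}[\boldsymbol{x}_{k+1}]=(\boldsymbol{A}+\boldsymbol{B}\boldsymbol{K})\,\mathbb{E}[\boldsymbol{x}_k].
\]
Since $\boldsymbol{K}$ is stabilizing, $\rho(\boldsymbol{A}+\boldsymbol{B}\boldsymbol{K})<1$, so $\mathbb{E}[\boldsymbol{x}_k]\to\boldsymbol{0}$. It follows that $\mathbb{E}[\boldsymbol{u}_k]=\boldsymbol{K}\mathbb{E}[\boldsymbol{x}_k]\to\boldsymbol{0}$ and $\mathbb{E}[\boldsymbol{e}_k]\to\boldsymbol{0}$, which confirms the hypotheses. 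Moreover, all three converge geometrically.

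Geometric decay makes the increments summable. The first block of $\boldsymbol{x}_k$ gives $\mathbb{E}[\boldsymbol{T}_{k+1}-\boldsymbol{T}_k]\to\boldsymbol{0}$, and $\mathbb{E}[\boldsymbol{h}^G_k-\boldsymbol{h}^G_{k-1}]\to\boldsymbol{0}$. By telescoping, $\mathbb{E}[\boldsymbol{T}_k]$ and $\mathbb{E}[\boldsymbol{h}^G_k]$ are Cauchy and converge to limits $\bar{\boldsymbol{T}}$ and $\bar{\boldsymbol{h}}^G$.

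Next, pass to the limit in the expectation of \eqref{disheat}, using $\mathbb{E}[\boldsymbol{h}^{L,\mathrm{sto}}_k]=\boldsymbol{0}$ so that the load reduces to $\bar{\boldsymbol{h}}^L=\boldsymbol{h}^{L,\mathrm{con}}$. Since $\boldsymbol{A}_T=\boldsymbol{I}-\tau\boldsymbol{A}_1$, the limit equation reduces, after dividing by $\tau$, to
\[
\boldsymbol{A}_1\bar{\boldsymbol{T}}=\boldsymbol{B}_1\bar{\boldsymbol{h}}^G-\boldsymbol{B}_2\bar{\boldsymbol{h}}^L,
\]
which is exactly the equilibrium condition of \eqref{origindy}. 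From \eqref{ll13} and linearity, $\mathbb{E}[\boldsymbol{e}_k]\to\boldsymbol{C}_T\bar{\boldsymbol{T}}+\boldsymbol{D}_T\bar{\boldsymbol{h}}^G=\boldsymbol{0}$. Splitting this into its two blocks gives $\boldsymbol{F^M}\bar{\boldsymbol{h}}^G=\boldsymbol{0}$ and $\boldsymbol{1}^\top\boldsymbol{F^D}\bar{\boldsymbol{T}}=0$. Theorem~\ref{optimalitycondition} then yields $(\bar{\boldsymbol{T}},\bar{\boldsymbol{h}}^G)=(\boldsymbol{T}^\star,\boldsymbol{h}^{G\star})$.

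A subtlety is that $\boldsymbol{x}_k$ contains only increments, so the limits must be shown to exist, not merely that the increments vanish. Geometric decay settles this.

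\textbf{Part 2: mean-square bound and stationary covariance.} Write $\boldsymbol{A}_K=\boldsymbol{A}+\boldsymbol{B}\boldsymbol{K}$ and $\boldsymbol{\Sigma}_k=\mathbb{E}[\boldsymbol{x}_k\boldsymbol{x}_k^\top]$. Take the disturbance independent across time and of the past state; in the paper's setting this should be stated explicitly as part of Assumption~\ref{ass:noise_bounded}. Then
\[
\boldsymbol{\Sigma}_{k+1}=\boldsymbol{A}_K\boldsymbol{\Sigma}_k\boldsymbol{A}_K^\top+\boldsymbol{B}_w\boldsymbol{U}^{\mathrm{dis}}\boldsymbol{B}_w^\top.
\]
Because $\rho(\boldsymbol{A}_K)<1$, this affine map on symmetric matrices is a contraction in a suitable norm. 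Hence $\boldsymbol{\Sigma}_k$ converges to the unique solution of the discrete Lyapunov equation,
\[
\boldsymbol{\Sigma}_\infty=\sum_{i\ge0}\boldsymbol{A}_K^i\boldsymbol{B}_w\boldsymbol{U}^{\mathrm{dis}}\boldsymbol{B}_w^\top(\boldsymbol{A}_K^\top)^i,
\]
which is linear in $\boldsymbol{U}^{\mathrm{dis}}$ and therefore scales with the disturbance level. This gives $\sup_k\mathbb{E}\|\boldsymbol{x}_k\|^2<\infty$.

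\textbf{Main obstacle.} The delicate step is the noise correlation. As defined, $\boldsymbol{w}_k=\boldsymbol{h}^{L,\mathrm{sto}}_k-\boldsymbol{h}^{L,\mathrm{sto}}_{k-1}$ is a differenced sequence and is generally correlated with $\boldsymbol{w}_{k-1}$, so the cross terms in the covariance recursion need not vanish. To avoid relying on whiteness, I would bound the state pathwise using $\|\boldsymbol{w}_k\|\le\delta$:
\[
\|\boldsymbol{x}_k\|\le c\rho^k\|\boldsymbol{x}_0\|+\frac{c\,\|\boldsymbol{B}_w\|\,\delta}{1-\rho},
\]
with $\rho<1$ and $c$ constants from the Schur stability of $\boldsymbol{A}_K$. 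This gives the mean-square bound without any independence assumption. The claim of a unique stationary covariance I would then state under wide-sense stationarity of $\{\boldsymbol{w}_k\}$, via the convolution sum $\boldsymbol{x}_k=\sum_i\boldsymbol{A}_K^i\boldsymbol{B}_w\boldsymbol{w}_{k-1-i}$ plus a decaying transient.
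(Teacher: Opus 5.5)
Your proof is correct. Part~1 follows the paper's route (vanishing mean increments, then the equilibrium of \eqref{origindy}, then Theorem~\ref{optimalitycondition}), but it closes a gap there. The paper asserts that $\mathbb{E}[\boldsymbol{u}_k]\to\boldsymbol{0}$ makes $\mathbb{E}[\boldsymbol{h}^G_k]$ constant for large $k$, which does not follow from vanishing increments. You instead use $\mathbb{E}[\boldsymbol{x}_{k+1}]=(\boldsymbol{A}+\boldsymbol{B}\boldsymbol{K})\mathbb{E}[\boldsymbol{x}_k]$ to get geometric decay. This also shows that the three stated hypotheses are implied by $\boldsymbol{K}$ being stabilizing. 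Geometric decay gives summable increments, hence the limits exist, and you derive the equilibrium equation explicitly from the Euler recursion. One small point: Assumption~\ref{ass:noise_bounded} gives $\mathbb{E}[\boldsymbol{w}_k]=\boldsymbol{0}$, which only makes the mean of $\boldsymbol{h}^{L,\mathrm{sto}}_k$ constant, not zero. So $\bar{\boldsymbol{h}}^L$ should be read as the mean load; nothing else changes.

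Part~2 takes a genuinely different route. The paper appeals to the standard Lyapunov covariance recursion, which tacitly requires $\boldsymbol{w}_k$ to be white and uncorrelated with $\boldsymbol{x}_k$. As you note, this fails for the differenced noise $\boldsymbol{w}_k=\boldsymbol{h}^{L,\mathrm{sto}}_k-\boldsymbol{h}^{L,\mathrm{sto}}_{k-1}$. Your pathwise bound yields $\sup_k\mathbb{E}\|\boldsymbol{x}_k\|^2<\infty$ from $\|\boldsymbol{w}_k\|\le\delta$ and Schur stability alone. You then tie convergence of $\mathbb{E}[\boldsymbol{x}_k\boldsymbol{x}_k^\top]$ to wide-sense stationarity, and that extra hypothesis is genuinely needed. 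Assumption~\ref{ass:noise_bounded} fixes only the marginal covariance, and time-varying correlations can make the second moment oscillate. So the statement's ``equivalently'' really joins two separate claims. When whiteness holds, the paper's route gives the explicit Lyapunov characterization of the limit that is used later for the cost \eqref{eqcost}. Yours gives the boundedness claim under the assumptions as actually stated, and it makes clear which extra hypothesis the stationarity claim needs.
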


\begin{pf}
From $\mathbb{E}[\boldsymbol{u}_k]=\mathbb{E}[\boldsymbol{h}^G_k-\boldsymbol{h}^G_{k-1}]=\boldsymbol{0}$ it follows that
$\mathbb{E}[\boldsymbol{h}^G_k]$ is constant for $k$ sufficiently large.
Together with $\mathbb{E}[\boldsymbol{x}_k]\to\boldsymbol{0}$ and $\mathbb{E}[\boldsymbol{e}_k]\to\boldsymbol{0}$, the expected steady-state of the original DHS satisfies the KKT-based optimality conditions characterized in \textbf{Theorem~\ref{optimalitycondition}}; hence the original DHS converges in expectation to
$\bigl(\boldsymbol{T^\star}, \boldsymbol{h^{G\star}}\bigr)$. Since $\boldsymbol{K}$ is stabilizing, the closed-loop matrix satisfies $\rho(\boldsymbol{A}+\boldsymbol{B}\boldsymbol{K})<1$,
where $\rho(\cdot)$ denotes the spectral radius.
With zero-mean disturbances of finite covariance, the standard discrete-time Lyapunov argument implies mean-square boundedness and convergence of $\mathbb{E}[\boldsymbol{x}_k\boldsymbol{x}_k^\top]$ to the stationary covariance.
\end{pf}

\subsection{Model-based LQR benchmark}
\subsubsection{Stationary distribution of $\boldsymbol{x_k}$}
To support online learning, we inject a bounded exploration signal with magnitude $\sigma>0$. Let $\{\boldsymbol{w}_{s,k}\}$ be a zero-mean bounded sequence with $\mathbb{E}[\boldsymbol{w}_{s,k}] = \boldsymbol{0}$ and $\mathbb{E}[\boldsymbol{w}_{s,k}\boldsymbol{w}_{s,k}^\top] = \boldsymbol{I}_k$. The control input is $\boldsymbol{u}_k = \boldsymbol{K}\boldsymbol{x}_k + \sigma \boldsymbol{w}_{s,k}$, yielding the closed-loop dynamics $\boldsymbol{x}_{k+1} = (\boldsymbol{A}+\boldsymbol{B}\boldsymbol{K})\boldsymbol{x}_k+\boldsymbol{\epsilon}_k$, where $\boldsymbol{\epsilon}_k:=\boldsymbol{B_w w}_k+\sigma \boldsymbol{B w}_{s,k}$.

Under Assumption~\ref{ass:noise_bounded}, $\{\boldsymbol{w}_k\}$ and $\{\boldsymbol{w}_{s,k}\}$ are zero-mean and independent, so $\boldsymbol{\epsilon}_k$ is zero-mean with covariance $\mathbb{E}[\boldsymbol{\epsilon}_k \boldsymbol{\epsilon}_k^\top]$ $=\boldsymbol{B_w U^{dis} B_w^\top}+\sigma^2 \boldsymbol{B B^\top}=: \boldsymbol{U_\epsilon}\succ \boldsymbol{0}$. Since both sequences are bounded, $\{\boldsymbol{\epsilon}_k\}$ is also bounded.

For $\rho(\boldsymbol{A}+\boldsymbol{B}\boldsymbol{K})<1$, the closed-loop state process admits a stationary covariance $\boldsymbol{U}_K$, which is the unique positive semidefinite solution to
\begin{subequations}
\begin{align}
\boldsymbol{U_K}
&=
\boldsymbol{U_\epsilon}
+
(\boldsymbol{A}+\boldsymbol{B}\boldsymbol{K})\,\boldsymbol{U_K}\,(\boldsymbol{A}+\boldsymbol{B}\boldsymbol{K})^\top, 
\label{eq:UK_Lyap}
\\
\boldsymbol{U_K}
&=
\sum_{i=0}^\infty 
(\boldsymbol{A}+\boldsymbol{B}\boldsymbol{K})^i
\,\boldsymbol{U_\epsilon}\,
[(\boldsymbol{A}+\boldsymbol{B}\boldsymbol{K})^\top]^i.
\label{eq:UK_series}
\end{align}
\end{subequations}
It is a standard result for stable discrete-time stochastic linear systems~\cite{zhou1996robust}.
\subsubsection{Cost function}
To quantify long-run regulation performance, we define the performance output as
$\boldsymbol{z}_k=
\begin{bmatrix}
\boldsymbol{Q}^{1/2}\boldsymbol{x}_k \\
\boldsymbol{R}^{1/2}\boldsymbol{u}_k
\end{bmatrix}$,
where $\boldsymbol{Q}\succ\boldsymbol{0}$ and $\boldsymbol{R}\succ\boldsymbol{0}$ weight state deviations and control effort, respectively. We consider the feedback law $\boldsymbol{u}_k=\boldsymbol{K}\boldsymbol{x}_k$.

Under Assumption~\ref{ass:noise_bounded} and a stabilizing gain $\boldsymbol{K}$ satisfying $\rho(\boldsymbol{A}+\boldsymbol{B}\boldsymbol{K})<1$, the closed-loop system admits a unique stationary covariance. The corresponding $\mathcal{H}_2$ cost of the closed-loop map $\mathcal{T}(\boldsymbol{K}):\boldsymbol{B}_w\boldsymbol{w}\mapsto\boldsymbol{z}$ is
\begin{subequations}\label{eqcost}
\begin{align}
C(\boldsymbol{K})
:=\|\mathcal{T}(\boldsymbol{K})\|_{\mathcal H_2}^2
&= \mathrm{Tr}\!\bigl((\boldsymbol{Q}+\boldsymbol{K}^\top\boldsymbol{R}\boldsymbol{K})\,\boldsymbol{U}_K\bigr),
\label{eqcosta}\\
&= \mathrm{Tr}\!\bigl(\boldsymbol{P}_K\,\boldsymbol{U}_\epsilon\bigr),
\label{eqcostb}
\end{align}
\end{subequations}
where $\boldsymbol{P}_K\succ\boldsymbol{0}$ is the unique solution to
\begin{equation}\label{lyapunovdefine}
\boldsymbol{P}_K
=
\boldsymbol{Q}
+
\boldsymbol{K}^\top\boldsymbol{R}\boldsymbol{K}
+
(\boldsymbol{A}+\boldsymbol{B}\boldsymbol{K})^\top
\boldsymbol{P}_K
(\boldsymbol{A}+\boldsymbol{B}\boldsymbol{K}).
\end{equation}
Equivalently, $C(\boldsymbol{K})$ is the infinite-horizon long-run average quadratic cost
$C(\boldsymbol{K})=
\lim_{N\to\infty}\frac{1}{N}\sum_{k=0}^{N-1}\mathbb{E}[\boldsymbol{z}_k^\top\boldsymbol{z}_k]$.

\subsubsection{Model-based LQR}
When the system matrices $\boldsymbol{(A,B)}$ are known and the pair is stabilizable, the infinite-horizon discrete-time LQR problem admits the optimal state-feedback solution
\begin{equation}
\boldsymbol{K^\star}
=
-(\boldsymbol{R}+\boldsymbol{B}^\top \boldsymbol{P^\star}\boldsymbol{B})^{-1}
\boldsymbol{B}^\top \boldsymbol{P^\star}\boldsymbol{A},
\end{equation}
where $\boldsymbol{P^\star}\succ\boldsymbol{0}$ is the stabilizing solution to the discrete-time algebraic Riccati equation. This standard model-based solution serves as a benchmark for the subsequent data-driven controller design~\cite{anderson2007optimal}.

\subsubsection{Indirect certainty-equivalence (CE) LQR}
As a benchmark data-driven approach, CE first identifies a model from data and then solves the corresponding LQR problem as if the identified dynamics were exact. Consider a trajectory of length $t$ generated under a stabilizing policy $\boldsymbol{K}_0$, and define
\begin{equation}\label{collect}
\begin{aligned}
\boldsymbol{X}_0 &:= [\boldsymbol{x}_0 \quad \boldsymbol{x}_1 \quad \cdots \quad \boldsymbol{x}_{t-1}] \in \mathbb{R}^{n \times t}, \\
\boldsymbol{U}_0 &:= [\boldsymbol{u}_0 \quad \boldsymbol{u}_1 \quad \cdots \quad \boldsymbol{u}_{t-1}] \in \mathbb{R}^{m \times t}, \\
\boldsymbol{X}_1 &:= [\boldsymbol{x}_1 \quad \boldsymbol{x}_2 \quad \cdots \quad \boldsymbol{x}_t] \in \mathbb{R}^{n \times t},\\
\boldsymbol{W}_0 &:= [\boldsymbol{w}_0 \quad \boldsymbol{w}_1 \quad \cdots \quad \boldsymbol{w}_{t-1}] \in \mathbb{R}^{n \times t}.
\end{aligned}
\end{equation}
Let $\boldsymbol{D}_0:=\begin{bmatrix}\boldsymbol{U}_0\\ \boldsymbol{X}_0\end{bmatrix}= [\boldsymbol{d}_0 \quad \boldsymbol{d}_1 \quad \cdots \quad \boldsymbol{d}_{t-1}]$ and $\boldsymbol{\Phi}:=\frac{1}{t}\boldsymbol{D}_0\boldsymbol{D}_0^\top$, so that $\boldsymbol{X}_1 = \boldsymbol{A X}_0 + \boldsymbol{B U}_0 + \boldsymbol{W}_0$.

\begin{assumption}[Persistent excitation]
\label{ass:PEo}
The input sequence $\{\boldsymbol{u}_k\}$ is persistently exciting of order $n+m$, so that the data matrix $\boldsymbol{D}_0$ has full row rank.
\end{assumption}

Under Assumption~\ref{ass:PEo}, the least-squares estimate is
\begin{equation}\label{lseor}
[\boldsymbol{\hat{B}},\,\boldsymbol{\hat{A}}]
=
\boldsymbol{\bar{X}}_1\boldsymbol{\Phi}^{-1},
\qquad
\boldsymbol{\bar{X}}_1:=\frac{1}{t}\boldsymbol{X}_1\boldsymbol{D}_0^\top.
\end{equation}
The aggregated-disturbance covariance is estimated by
\begin{equation}\label{matrixestimate}
\boldsymbol{\hat{\varepsilon}}_0
=
\boldsymbol{X}_1-(\boldsymbol{\hat{A}}+\boldsymbol{\hat{B}}\boldsymbol{K}_0)\boldsymbol{X}_0,
\qquad
\boldsymbol{\hat{U}}_\epsilon
=
\frac{1}{t}\boldsymbol{\hat{\varepsilon}}_0\boldsymbol{\hat{\varepsilon}}_0^\top.
\end{equation}
Fixing $\boldsymbol{\hat{U}}_\epsilon$ from the initial batch, the CE LQR problem is
\begin{subequations}\label{eq:ceop_u}
\begin{align}
\min_{\boldsymbol{K},\boldsymbol{\hat{U}}_K}\quad
&C_{CE}(\boldsymbol{K})
=
\mathrm{Tr}\!\left((\boldsymbol{Q}+\boldsymbol{K}^\top\boldsymbol{R}\boldsymbol{K})\boldsymbol{\hat{U}}_K\right),\label{eq:ceop_ua}\\
\text{s.t.}\quad
&\boldsymbol{\hat{U}}_K
=
\boldsymbol{\hat{U}}_\epsilon
+
(\boldsymbol{\hat{A}}+\boldsymbol{\hat{B}}\boldsymbol{K})
\boldsymbol{\hat{U}}_K
(\boldsymbol{\hat{A}}+\boldsymbol{\hat{B}}\boldsymbol{K})^\top.
\label{eq:ceop_ub}
\end{align}
\end{subequations}

\begin{remark}[Role of estimating $\boldsymbol{U}_\epsilon$]
Although the optimal controller is theoretically independent of $\boldsymbol{U}_\epsilon$, estimating $\boldsymbol{U}_\epsilon$ improves the physical relevance of covariance-weighted learning in DHS applications, where disturbances may be anisotropic and state-correlated.
\end{remark}

\section{Data-enabled policy optimization (DeePO)}

\subsection{Covariance parameterization}

\begin{assumption}[Uniform boundedness of signals]
\label{ass:bounded_signals}
There exist positive constants $\bar x>0$, $\bar u>0$, and $\bar d>0$ such that $\|\boldsymbol{x}_k\|\le \bar x$, $\|\boldsymbol{u}_k\|\le \bar u$, and $\|\boldsymbol{d}_k\|\le \bar d$ for all $k\in\mathbb{N}$.
\end{assumption}

To obtain a data-enabled parameterization of the LQR problem, define
$\boldsymbol{\bar{U}}_0 := \frac{1}{t}\boldsymbol{U}_0 \boldsymbol{D}_0^\top$,
$\boldsymbol{\bar{X}}_0 := \frac{1}{t}\boldsymbol{X}_0 \boldsymbol{D}_0^\top$,
$\boldsymbol{\bar{X}}_1 := \frac{1}{t}\boldsymbol{X}_1 \boldsymbol{D}_0^\top$, and
$\boldsymbol{\bar{W}}_0 := \frac{1}{t}\boldsymbol{W}_0 \boldsymbol{D}_0^\top$.
Then
$\boldsymbol{\bar{X}}_1 = \boldsymbol{A}\boldsymbol{\bar{X}}_0 + \boldsymbol{B}\boldsymbol{\bar{U}}_0 + \boldsymbol{\bar{W}}_0$.
Under Assumption~\ref{ass:PEo}, there exists a unique matrix $\boldsymbol{V}\in\mathbb{R}^{(m+n)\times n}$ such that
$\begin{bmatrix} \boldsymbol{K}\\ \boldsymbol{I}_n \end{bmatrix} = \boldsymbol{\Phi}\boldsymbol{V} = \begin{bmatrix} \boldsymbol{\bar{U}}_0\boldsymbol{V}\\ \boldsymbol{\bar{X}}_0\boldsymbol{V} \end{bmatrix}$.
Hence, the closed-loop matrix can be written as
$\boldsymbol{A}+\boldsymbol{B}\boldsymbol{K}
=
(\boldsymbol{\bar{X}}_1-\boldsymbol{\bar{W}}_0)\boldsymbol{V}$.

Since $\{\boldsymbol{d}_k\}$ is uniformly bounded and $\{\boldsymbol{w}_k\}$ is zero-mean with finite covariance, $\|\boldsymbol{\bar{W}}_0\|_2 \xrightarrow[t\to\infty]{\mathrm{a.s.}} 0$. Thus, for large samples, $\boldsymbol{A}+\boldsymbol{B}\boldsymbol{K}\approx \boldsymbol{\bar{X}}_1\boldsymbol{V}$. The covariance estimate is
\begin{equation}\label{estimatedmodelnoise}
\boldsymbol{\hat{\varepsilon}}_0
=
\boldsymbol{X}_1-\boldsymbol{\bar{X}}_1\boldsymbol{V}\boldsymbol{X}_0,\qquad
\boldsymbol{\hat{U}}_\epsilon
=
\frac{1}{t}\boldsymbol{\hat{\varepsilon}}_0\boldsymbol{\hat{\varepsilon}}_0^\top.
\end{equation}

Using $\begin{bmatrix} \boldsymbol{K}\\ \boldsymbol{I}_n \end{bmatrix} = \boldsymbol{\Phi}\boldsymbol{V}$ and $\boldsymbol{K}=\boldsymbol{\bar{U}}_0\boldsymbol{V}$, the CE-LQR problem in~\eqref{eq:ceop_u} can be rewritten in terms of $\boldsymbol{V}$ as
\begin{subequations}\label{diop2}
\begin{align}
\min_{\boldsymbol{V},\boldsymbol{\hat{U}}_K}
J(\boldsymbol{V})
&=
\mathrm{Tr}\!\left(
(\boldsymbol{Q}+\boldsymbol{V}^\top \boldsymbol{\bar{U}}_0^\top \boldsymbol{R}\boldsymbol{\bar{U}}_0 \boldsymbol{V})
\boldsymbol{\hat{U}}_K
\right), \label{diop2a}\\
\text{s.t.}\quad
\boldsymbol{\hat{U}}_K
&=
\boldsymbol{\hat{U}}_\epsilon
+
(\boldsymbol{\bar{X}}_1\boldsymbol{V})\boldsymbol{\hat{U}}_K(\boldsymbol{\bar{X}}_1\boldsymbol{V})^\top, \label{diop2b}\\
\boldsymbol{\bar{X}}_0\boldsymbol{V}
&=
\boldsymbol{I}. \label{diop2c}
\end{align}
\end{subequations}
Let $\boldsymbol{V}:=\boldsymbol{\Phi}^{-1}\begin{bmatrix}\boldsymbol{K}\\ \boldsymbol{I}_n\end{bmatrix}$. Then~\eqref{eq:ceop_u} and~\eqref{diop2} are equivalent, and the optimal gain is recovered as $\boldsymbol{K}^\star=\boldsymbol{\bar U}_0\boldsymbol{V}^\star$~\cite{zhao2025data}.

\subsection{DeePO Implementation}
\begin{assumption}[Feasibility of covariance parametrization]
\label{ass:feasible_set}
The covariance parameterization of the LQR problem in~\eqref{diop2} admits a nonempty feasible set. Specifically, there exist constants $\alpha\in(0,1)$ and $\bar K>0$ such that
$\mathcal{S}
:=
\left\{
\boldsymbol{V} \middle|
\boldsymbol{\bar X}_0 \boldsymbol{V} = \boldsymbol{I}_n,
\rho(\boldsymbol{\bar X}_1 \boldsymbol{V})\le 1-\alpha,
\|\boldsymbol{\bar U}_0 \boldsymbol V\|_2\le \bar K
\right\}$
is nonempty.
\end{assumption}

\begin{theorem}
\label{thm:gradV}
Consider~\eqref{diop2}. Assume that $\boldsymbol{V}$ satisfies $\boldsymbol{\bar X}_0\boldsymbol{V}=\boldsymbol{I}$ and that $\boldsymbol{\hat U}_K$ is the unique solution of~\eqref{diop2b} associated with $\boldsymbol{V}$. Then the gradient of $J(\boldsymbol{V})$ in~\eqref{diop2a} is
\begin{equation}
\nabla_{\boldsymbol{V}} J(\boldsymbol{V})
=
2\left(
\boldsymbol{\bar U}_0^\top \boldsymbol{R}\boldsymbol{\bar U}_0
+
\boldsymbol{\bar X}_1^\top \boldsymbol{P}_V \boldsymbol{\bar X}_1
\right)\boldsymbol{V}\boldsymbol{\hat U}_K.
\end{equation}
\end{theorem}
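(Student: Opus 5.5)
Let $\boldsymbol{A}_V:=\boldsymbol{\bar X}_1\boldsymbol{V}$, and let $\boldsymbol{P}_V$ be the unique solution of the dual Lyapunov equation
\begin{equation*}
\boldsymbol{P}_V=\boldsymbol{Q}+\boldsymbol{V}^\top\boldsymbol{\bar U}_0^\top\boldsymbol{R}\boldsymbol{\bar U}_0\boldsymbol{V}+\boldsymbol{A}_V^\top\boldsymbol{P}_V\boldsymbol{A}_V,
\end{equation*}
which is the data-based analogue of \eqref{lyapunovdefine}. The plan is to differentiate $J$ by a standard Lyapunov adjoint argument, the same one used for the model-based cost \eqref{eqcost}. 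We treat $J$ as a function of $\boldsymbol{V}$ on the open set where $\rho(\boldsymbol{A}_V)<1$. On this set $\boldsymbol{\hat U}_K$ is uniquely defined by \eqref{diop2b} and depends smoothly on $\boldsymbol{V}$, through the series representation analogous to \eqref{eq:UK_series}. The affine constraint \eqref{diop2c} is not differentiated through. The stated formula is the unconstrained (Euclidean) gradient of the cost, and any projection onto the null space of $\boldsymbol{\bar X}_0$ is left to the algorithm. The covariance $\boldsymbol{\hat U}_\epsilon$ is fixed from the initial batch and so does not depend on $\boldsymbol{V}$.

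\emph{Step 1 (perturb both factors).} Take a perturbation $\boldsymbol{V}+\Delta$ and write $\boldsymbol{M}_V:=\boldsymbol{Q}+\boldsymbol{V}^\top\boldsymbol{\bar U}_0^\top\boldsymbol{R}\boldsymbol{\bar U}_0\boldsymbol{V}$. The first-order variation of $J$ is
\begin{equation*}
dJ=\mathrm{Tr}(d\boldsymbol{M}_V\,\boldsymbol{\hat U}_K)+\mathrm{Tr}(\boldsymbol{M}_V\,d\boldsymbol{\hat U}_K),
\qquad
d\boldsymbol{M}_V=\Delta^\top\boldsymbol{\bar U}_0^\top\boldsymbol{R}\boldsymbol{\bar U}_0\boldsymbol{V}+\boldsymbol{V}^\top\boldsymbol{\bar U}_0^\top\boldsymbol{R}\boldsymbol{\bar U}_0\Delta .
\end{equation*}
Differentiating \eqref{diop2b} gives a Lyapunov equation for the variation of the covariance:
\begin{equation*}
d\boldsymbol{\hat U}_K-\boldsymbol{A}_V\,d\boldsymbol{\hat U}_K\,\boldsymbol{A}_V^\top=\boldsymbol{S},
\qquad
\boldsymbol{S}:=\boldsymbol{\bar X}_1\Delta\,\boldsymbol{\hat U}_K\boldsymbol{A}_V^\top+\boldsymbol{A}_V\boldsymbol{\hat U}_K\Delta^\top\boldsymbol{\bar X}_1^\top .
\end{equation*}

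\emph{Step 2 (adjoint identity; the main step).} We need to rewrite $\mathrm{Tr}(\boldsymbol{M}_V\,d\boldsymbol{\hat U}_K)$ without the implicit term $d\boldsymbol{\hat U}_K$. Substitute $\boldsymbol{M}_V=\boldsymbol{P}_V-\boldsymbol{A}_V^\top\boldsymbol{P}_V\boldsymbol{A}_V$ and use cyclicity of the trace:
\begin{equation*}
\mathrm{Tr}(\boldsymbol{M}_V\,d\boldsymbol{\hat U}_K)=\mathrm{Tr}\bigl(\boldsymbol{P}_V(d\boldsymbol{\hat U}_K-\boldsymbol{A}_V\,d\boldsymbol{\hat U}_K\,\boldsymbol{A}_V^\top)\bigr)=\mathrm{Tr}(\boldsymbol{P}_V\boldsymbol{S}).
\end{equation*}
This step requires uniqueness of both Lyapunov solutions, which holds because $\rho(\boldsymbol{A}_V)<1$. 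Alternatively, one can verify the same identity by summing the two series representations term by term.

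\emph{Step 3 (collect terms).} All terms now have the form $\mathrm{Tr}(\boldsymbol{G}^\top\Delta)$ or its transpose. We use that $\boldsymbol{P}_V$, $\boldsymbol{\hat U}_K$, and $\boldsymbol{R}$ are symmetric, together with $\mathrm{Tr}(\boldsymbol{X})=\mathrm{Tr}(\boldsymbol{X}^\top)$. This gives
\begin{equation*}
dJ=2\,\mathrm{Tr}\bigl(\boldsymbol{\hat U}_K\boldsymbol{V}^\top(\boldsymbol{\bar U}_0^\top\boldsymbol{R}\boldsymbol{\bar U}_0+\boldsymbol{\bar X}_1^\top\boldsymbol{P}_V\boldsymbol{\bar X}_1)\Delta\bigr).
\end{equation*}
Matching against $dJ=\langle\nabla_{\boldsymbol{V}}J,\Delta\rangle$ gives the claimed gradient
\begin{equation*}
\nabla_{\boldsymbol{V}}J(\boldsymbol{V})=2\bigl(\boldsymbol{\bar U}_0^\top\boldsymbol{R}\boldsymbol{\bar U}_0+\boldsymbol{\bar X}_1^\top\boldsymbol{P}_V\boldsymbol{\bar X}_1\bigr)\boldsymbol{V}\boldsymbol{\hat U}_K .
\end{equation*}

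The main obstacle is Step 2. It must be justified cleanly, including the well-posedness and differentiability of $\boldsymbol{V}\mapsto\boldsymbol{\hat U}_K$ in a neighbourhood of $\boldsymbol{V}$, which follows from stability of $\boldsymbol{A}_V$ and the implicit function theorem. Once that is in place, the rest is bookkeeping with symmetric matrices and trace cyclicity.
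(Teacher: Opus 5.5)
Your proof is correct and is essentially the paper's argument, which defers to the differential computation of Lemma~2 in DeePO. Both rest on the trace duality between the Lyapunov equation for $\boldsymbol{\hat U}_K$ and the one for $\boldsymbol{P}_V$, just from opposite sides: the paper writes $J=\mathrm{Tr}(\boldsymbol{P}_V\boldsymbol{\hat U}_\epsilon)$, differentiates the $\boldsymbol{P}_V$ equation recursively and pairs the result with $\boldsymbol{\hat U}_K$, whereas you differentiate the $\boldsymbol{\hat U}_K$ equation and pair with $\boldsymbol{P}_V$. Your explicit caveats (treating $\boldsymbol{\hat U}_\epsilon$ as fixed and not differentiating through $\boldsymbol{\bar X}_0\boldsymbol{V}=\boldsymbol{I}$) are exactly the implicit assumptions under which the paper's formula holds.
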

The proof is deferred to Appendix~\ref{app:proof_gradV}.

\subsubsection{Rank-1 gradient-descent implementation of DeePO }
In the adaptive control setting, we collect online closed-loop data  
$\boldsymbol{(x_t, u_t, x_{t+1})}$ at each time step $t$, which are used to form the data matrices  
$\boldsymbol{(X_{0,t+1}, U_{0,t+1}, X_{1,t+1})}$.  
These data enable a single projected gradient-descent update of the parameterized policy at time~$t$.  The updated policy is then applied to the system, and the procedure is repeated iteratively.  
The method is summarized in \textbf{Algorithm~1}.

\begin{algorithm}[h]
\caption{Rank-1 GD DeePO}
\label{alg:rank1_gd_deepo}
\KwIn{Initial stabilizing policy $\boldsymbol{K}_{t_0}$, stepsize $\eta$, and offline data $(\boldsymbol{X}_{0,t_0},\boldsymbol{U}_{0,t_0},\boldsymbol{X}_{1,t_0})$.}
\For{$t=t_0,t_0+1,\dots$}{
    Apply $\boldsymbol{u}_t=\boldsymbol{K}_t\boldsymbol{x}_t$ and observe $\boldsymbol{x}_{t+1}$;

    Form $\boldsymbol{\phi}_t:=\begin{bmatrix}\boldsymbol{u}_t^\top & \boldsymbol{x}_t^\top\end{bmatrix}^\top$;\\
    Update $\bar{\boldsymbol X}_{0,t+1}$, $\bar{\boldsymbol U}_{0,t+1}$, $\bar{\boldsymbol X}_{1,t+1}$ recursively;\\
    Update $\boldsymbol{\Phi}_{t+1}=\frac{t\boldsymbol{\Phi}_t+\boldsymbol{\phi}_t\boldsymbol{\phi}_t^\top}{t+1}$ and compute $\boldsymbol{\Phi}^{-1}_{t+1}$;\\
    Compute $\boldsymbol{V}_{t+1}=\boldsymbol{\Phi}_{t+1}^{-1}\begin{bmatrix}\boldsymbol{K}_t\\ \boldsymbol{I}_n\end{bmatrix}$;\\
    Perform one-step projected GD:
    \begin{equation}\label{update2}
    \boldsymbol{V}'_{t+1}
    =
    \boldsymbol{V}_{t+1}
    -
    \eta\,
    \boldsymbol{\Pi}_{\bar{\boldsymbol X}_{0,t+1}}
    \nabla J_{t+1}(\boldsymbol{V}_{t+1});
    \end{equation}
    Update the control gain:
    \begin{equation}\label{update3}
    \boldsymbol{K}_{t+1}
    =
    \bar{\boldsymbol U}_{0,t+1}\boldsymbol{V}'_{t+1}.
    \end{equation}
}
\end{algorithm}

\begin{assumption}[Persistency of excitation for online DeePO]
\label{ass:PE}
For all $t \ge t_0$, the input sequence
$\{\boldsymbol{u}_k\}_{k=0}^{t-1}$ provides a uniform level of excitation.
Specifically, the input matrix
$\boldsymbol{U}_{0,t} := [\boldsymbol{u}_0,\dots,\boldsymbol{u}_{t-1}]$
is persistently exciting of order $n+1$ with excitation level $\gamma\sqrt{t(n+1)}$,
in the sense that the associated Hankel matrix
\[
\mathcal H_{n+1}(\boldsymbol{U}_{0,t})
=
\begin{bmatrix}
\boldsymbol{u}_0   & \boldsymbol{u}_1    & \cdots & \boldsymbol{u}_{t-n-1} \\
\boldsymbol{u}_1   & \boldsymbol{u}_2    & \cdots & \boldsymbol{u}_{t-n}   \\
\vdots             & \vdots              & \ddots & \vdots                 \\
\boldsymbol{u}_n   & \boldsymbol{u}_{n+1}& \cdots & \boldsymbol{u}_{t-1}
\end{bmatrix}
\]
satisfies
$\sigma_{\min}\!\bigl(\mathcal H_{n+1}(\boldsymbol{U}_{0,t})\bigr)
\ge
\gamma\sqrt{t(n+1)}$,
for some constant $\gamma>0$ independent of $t$.
\end{assumption}

\begin{remark}
Assumption~\ref{ass:PE} ensures that sufficiently informative data are available for covariance estimation and online policy updates, and together with Assumption~\ref{ass:noise_bounded} determines the SNR used in the subsequent analysis~\cite{zhao2025data}.
\end{remark}
The signal-to-noise ratio (SNR) is defined as $\gamma/\delta$, where $\gamma$ and $\delta$ are defined in Assumptions~\ref{ass:PE} and~\ref{ass:noise_bounded}, respectively.

The sample covariance matrices are updated recursively. Let $\boldsymbol{\phi}_t:=\begin{bmatrix}\boldsymbol{u}_t^\top & \boldsymbol{x}_t^\top\end{bmatrix}^\top$. Then, for example,
$\bar{\boldsymbol X}_{0,t+1}=\frac{t}{t+1}\bar{\boldsymbol X}_{0,t}+\frac{1}{t+1}\boldsymbol{x}_t\boldsymbol{\phi}_t^\top$,
and similarly for $\bar{\boldsymbol U}_{0,t+1}$ and $\bar{\boldsymbol X}_{1,t+1}$. The sample covariance matrix satisfies
$\boldsymbol{\Phi}_{t+1} = \frac{t \boldsymbol{\Phi}_t + \boldsymbol{\phi}_t \boldsymbol{\phi}_t^\top}{t+1}$.
By the Sherman--Morrison formula~\cite{sherman1950adjustment}, $\boldsymbol{\Phi}^{-1}_{t+1}$ is calculated as:
$\boldsymbol{\Phi}^{-1}_{t+1} = \frac{t+1}{t}\left(\boldsymbol{\Phi}_t^{-1} - \frac{\boldsymbol{\Phi}_t^{-1} \boldsymbol{\phi}_t \boldsymbol{\phi}_t^\top \boldsymbol{\Phi}_t^{-1}}{t + \boldsymbol{\phi}_t^\top \boldsymbol{\Phi}_t^{-1} \boldsymbol{\phi}_t}\right)$.

The projection operator $\boldsymbol{\Pi}_{\bar{\boldsymbol X}_{0,t+1}}$ in~\eqref{update2} denotes the orthogonal projection onto the tangent space of the affine constraint $\bar{\boldsymbol X}_{0,t+1}\boldsymbol V=\boldsymbol I$. It preserves the equality constraint to first order (\ref{diop2c}) and can be implemented either by the closed-form projector $\boldsymbol I-\bar{\boldsymbol X}_{0,t+1}^\top(\bar{\boldsymbol X}_{0,t+1}\bar{\boldsymbol X}_{0,t+1}^\top)^{-1}\bar{\boldsymbol X}_{0,t+1}$ or via a null-space parameterization.

For any gain $\boldsymbol K$, let $C(\boldsymbol K)$ denote the true steady-state cost and $\hat C(\boldsymbol K)$ its covariance-estimated counterpart. In the lifted formulation, let $J(\boldsymbol V)$ denote the corresponding objective in~\eqref{diop2}. We use the optimality gap $C(\boldsymbol{K}_t)-C^\star$ to quantify the convergence of the controller. Accordingly, define the regret
\begin{equation}\label{eq:regret}
\mathrm{Regret}_T
:=
\frac{1}{T}
\sum_{t=t_0}^{t_0+T-1}
\bigl(C(\boldsymbol{K}_t)-C^\star\bigr).
\end{equation}
This regret measures the average steady-state performance gap between the online controller and the optimal steady-state policy.  Let $\boldsymbol{e}_{\mathrm{noise}}:=\boldsymbol{U}_\epsilon-\widehat{\boldsymbol{U}}_\epsilon$ denote the covariance estimation error induced by finite data. The optimality gap ~\eqref{eq:regret} can be decomposed as
\small{\begin{equation}
\begin{aligned}\label{comp}
C(\boldsymbol{K}_t)-C^\star
=
\underbrace{
\bigl(C(\boldsymbol{K}_t)-\widehat{C}(\boldsymbol{K}_t)\bigr)
}_{\text{noise mismatch}}
+
\underbrace{
\bigl(\widehat{C}(\boldsymbol{K}_t)-\widehat{C}^\star\bigr)
}_{\text{CE regret}}
+
\underbrace{
\bigl(\widehat{C}^\star-C^\star\bigr)
}_{\text{optimal bias}},
\end{aligned}
\end{equation}}

\noindent where $\text{noise mismatch}=C(\boldsymbol K_t)-\hat C(\boldsymbol K_t)
=
\mathrm{Tr}\!\left(
\boldsymbol{P}_{K_t}\,
\boldsymbol{e_{\mathrm{noise}}}
\right)$ and $\text{optimal bias}=\widehat{C}^\star-C^\star=\mathrm{Tr}(\boldsymbol P_{K^\star}\,\boldsymbol e_{\mathrm{noise}})$.

\begin{lemma}\label{ceregret}
Under Assumptions~\ref{ass:aug_cont}--\ref{ass:PE}, let Algorithm~\ref{alg:rank1_gd_deepo} run for
$T:=t-t_0+1$ steps,   
there exist positive constants $c_i>0$, $i\in\{1,2,3,4\}$,
depending on $(\bar x,\bar u,\|\boldsymbol{R}\|_2,\sigma(\boldsymbol{Q}),$ $\sigma(\boldsymbol{R}),\hat{C}^\star)$,
such that, if the stepsize satisfies $\eta\in(0,c_1]$ and
$\mathrm{SNR}\ge c_2$, then
\begin{equation}\label{ceregretori}
\frac{1}{T}
\sum_{t=t_0}^{t_0+T-1}\text{CE regret}
\;\le\;
\frac{c_3}{\sqrt{T}}
\;+\;
c_4\,\mathrm{SNR}^{-1/2}.
\end{equation}
\end{lemma}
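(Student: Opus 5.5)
The plan follows the regret analysis of DeePO in \cite{zhao2025data}, applied to the augmented DHS dynamics~\eqref{augmentd} with the covariance-weighted surrogate $\hat C$. We first fix the estimated covariance $\boldsymbol{\hat U}_\epsilon$, so that $\hat C(\boldsymbol K)=\mathrm{Tr}(\boldsymbol P_K\boldsymbol{\hat U}_\epsilon)$. Because $\boldsymbol{\hat U}_\epsilon\succ\boldsymbol 0$ has eigenvalues bounded above and below under Assumptions~\ref{ass:noise_bounded} and~\ref{ass:bounded_signals}, the standard LQR landscape properties hold for $\hat C$ on any sublevel set $\{\boldsymbol K:\hat C(\boldsymbol K)\le \nu\}$. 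These are:
\begin{itemize}
\item[(i)] $L$-smoothness;
\item[(ii)] a gradient-dominance inequality $\hat C(\boldsymbol K)-\hat C^\star\le \mu\|\nabla \hat C(\boldsymbol K)\|_F^2$.
\end{itemize}
The constants depend on $\sigma(\boldsymbol Q)$, $\sigma(\boldsymbol R)$, $\|\boldsymbol R\|_2$, $\nu$ and $\hat C^\star$.

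The second step transfers these properties to the lifted objective $J_{t+1}(\boldsymbol V)$. Under Assumption~\ref{ass:PE}, a Willems-type lemma gives $\sigma_{\min}(\boldsymbol\Phi_t)\gtrsim\gamma^2$, so $\boldsymbol\Phi_t^{-1}$ is uniformly bounded. Combined with Theorem~\ref{thm:gradV}, this shows that the projected gradient $\boldsymbol\Pi_{\bar{\boldsymbol X}_{0,t+1}}\nabla J_{t+1}$, mapped back via $\boldsymbol K=\bar{\boldsymbol U}_0\boldsymbol V$, equals $\nabla\hat C(\boldsymbol K)$ preconditioned by a matrix whose eigenvalues are uniformly bounded. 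Choosing $\eta\le c_1$ below the inverse smoothness constant then gives the one-step descent inequality
\begin{equation*}
J_{t+1}(\boldsymbol V'_{t+1})\le J_{t+1}(\boldsymbol V_{t+1})-\tfrac{\eta}{2}\|\boldsymbol\Pi\nabla J_{t+1}(\boldsymbol V_{t+1})\|_F^2 .
\end{equation*}
By induction, using stability margins as in Assumption~\ref{ass:feasible_set}, every iterate stays in the sublevel set and remains stabilizing.

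The third step, which I expect to be the main obstacle, handles the drift of the objective $J_t\to J_{t+1}$ caused by appending new data. Two sources of change must be controlled. The first is the rank-one update of $\boldsymbol\Phi_t$, which by Sherman--Morrison is $O(1/t)$ given Assumption~\ref{ass:bounded_signals}. The second is the noise term $\bar{\boldsymbol W}_0$, which makes $\bar{\boldsymbol X}_1\boldsymbol V$ differ from $\boldsymbol A+\boldsymbol B\boldsymbol K$ by $O(\delta/\gamma)$. I would bound
\begin{equation*}
|\hat C_{t+1}(\boldsymbol K)-\hat C_t(\boldsymbol K)|\le \frac{a_1}{t}+a_2\,\mathrm{SNR}^{-1},
\end{equation*}
using perturbation bounds for the Lyapunov equation~\eqref{diop2b}. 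The condition $\mathrm{SNR}\ge c_2$ keeps these perturbations small enough that stability is preserved.

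Finally, I would combine the descent inequality, gradient dominance and the drift bound to get, with $\Delta_t:=\hat C(\boldsymbol K_t)-\hat C^\star$,
\begin{equation*}
\Delta_{t+1}\le \Delta_t-\tfrac{\eta}{2\mu'}\Delta_t+\frac{a_1}{t}+a_2\,\mathrm{SNR}^{-1},
\end{equation*}
or alternatively a version of this inequality in terms of $\|\nabla\|^2$. Telescoping over $t_0,\dots,t_0+T-1$ and dividing by $T$ bounds the average squared gradient norm by $O(1/T)+O(\mathrm{SNR}^{-1})$. Applying Cauchy--Schwarz together with the square-root form of gradient dominance converts this into the claimed $c_3/\sqrt T+c_4\,\mathrm{SNR}^{-1/2}$ bound on the average CE regret.
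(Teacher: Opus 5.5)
Your outline matches what the paper intends. The paper gives no argument of its own: it invokes the regret proof of Theorem~2 in \cite{zhao2025data} and notes that the frozen estimate $\hat{\boldsymbol U}_\epsilon$ only rescales constants. Your ingredients are those of that proof: the landscape of the $\hat{\boldsymbol U}_\epsilon$-weighted cost, transfer to the projected lifted gradient, one-step descent, drift control, and telescoping plus Cauchy--Schwarz with the square-root form of gradient dominance. The problem is the drift bookkeeping in your last step. Your bound $|\hat C_{t+1}(\boldsymbol K)-\hat C_t(\boldsymbol K)|\le a_1/t+a_2\,\mathrm{SNR}^{-1}$ has $a_1$ independent of the noise, and $\sum_{t=t_0}^{t_0+T-1}a_1/t\approx a_1\ln(1+T/t_0)$ is not $O(1)$. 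Telescoping therefore bounds the average squared projected gradient by $O(\ln T/T)+O(\mathrm{SNR}^{-1})$, not $O(1/T)+O(\mathrm{SNR}^{-1})$. Cauchy--Schwarz then returns $\sqrt{\ln T/T}$ rather than $c_3/\sqrt T$. Since $c_3$ may not depend on the SNR, the extra $\sqrt{\ln T}$ cannot be absorbed into $c_4\,\mathrm{SNR}^{-1/2}$; let the noise level go to zero to see this.

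The missing observation is that the Sherman--Morrison update of $\boldsymbol\Phi_t$ is not a separate, noise-independent source of drift. With $\hat{\boldsymbol U}_\epsilon$ frozen, $J_{t+1}(\boldsymbol V_{t+1})$ is the CE cost of $\boldsymbol K_t$ under $\hat{\boldsymbol\theta}_{t+1}:=[\hat{\boldsymbol B}_{t+1},\hat{\boldsymbol A}_{t+1}]=\bar{\boldsymbol X}_{1,t+1}\boldsymbol\Phi_{t+1}^{-1}=[\boldsymbol B,\boldsymbol A]+\bar{\boldsymbol W}_{0,t+1}\boldsymbol\Phi_{t+1}^{-1}$. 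It therefore depends on the data only through this estimate, and it does not move at all when the disturbance vanishes. The recursive least-squares identity $\hat{\boldsymbol\theta}_{t+1}-\hat{\boldsymbol\theta}_t=\frac{1}{t+1}(\boldsymbol w_t-\bar{\boldsymbol W}_{0,t}\boldsymbol\Phi_t^{-1}\boldsymbol\phi_t)\boldsymbol\phi_t^\top\boldsymbol\Phi_{t+1}^{-1}$, with $\boldsymbol w_t$ the new column of $\boldsymbol W_0$, shows that the drift is in fact $O(\delta/t)$. Its time average, $O(\delta\ln T/T)$, then belongs to the SNR term, and the $1/\sqrt T$ term is driven only by the initial cost. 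The same noise-proportional form is what lets your stabilizing induction close under $\mathrm{SNR}\ge c_2$ without an extra condition on $t_0$. Alternatively, unrolling the contraction recursion you wrote down, with your uniform gradient-dominance constant, gives $O(\ln T/T)+O(\mathrm{SNR}^{-1})$, which already implies the claim. It is the switch to telescoping plus Cauchy--Schwarz that exposes the slip. A smaller point: with noisy closed-loop data, $\sigma_{\min}(\boldsymbol\Phi_t)\gtrsim\gamma^2$ does not follow from input excitation by a Willems-type argument alone. It also uses $\mathrm{SNR}\ge c_2$, a second role of that condition besides preserving stability.
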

The proof follows the framework of Theorem~2 in~\cite{zhao2025data}, which relies on projected gradient dominance and local smoothness of the objective over the feasible set. In our setting, replacing the unit disturbance covariance with a bounded covariance estimate $\hat{\boldsymbol{U}}_\epsilon$ does not change the proof structure; it only rescales the associated constants. Since our focus is on DHS modeling and covariance-aware robustification, the full derivation is omitted for brevity.
 
\begin{lemma}
\label{lem:PK_bound}
Under Assumption~\ref{ass:feasible_set}, there exists a constant $P_{\max,2}>0$ such that
for all $\boldsymbol V\in\mathcal S$ with $\boldsymbol K=\bar{\boldsymbol U}_0\boldsymbol V$, the Lyapunov solution $\boldsymbol P_{\boldsymbol K}$ in (\ref{lyapunovdefine}) satisfies
$\|\boldsymbol P_{\boldsymbol K}\|_2\le P_{\max,2}$.
Consequently, $\|\boldsymbol P_{\boldsymbol K}\|_\star
\le
n\,\|\boldsymbol P_{\boldsymbol K}\|_2
\le
n P_{\max,2}
=:P_{\max}$,
uniformly for all $\boldsymbol V\in\mathcal S$.
\end{lemma}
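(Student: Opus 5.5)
The plan is to bound $\boldsymbol P_{\boldsymbol K}$ through its series representation and then use the two uniform constants $\alpha$ and $\bar K$ from Assumption~\ref{ass:feasible_set}. Fix $\boldsymbol V\in\mathcal S$ and set $\boldsymbol K=\bar{\boldsymbol U}_0\boldsymbol V$. The constraint $\bar{\boldsymbol X}_0\boldsymbol V=\boldsymbol I_n$ gives $\boldsymbol\Phi\boldsymbol V=[\boldsymbol K^\top\ \boldsymbol I_n]^\top$, so $\boldsymbol A+\boldsymbol B\boldsymbol K=(\bar{\boldsymbol X}_1-\bar{\boldsymbol W}_0)\boldsymbol V$. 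Assumption~\ref{ass:feasible_set} constrains $\rho(\bar{\boldsymbol X}_1\boldsymbol V)$, which is the data closed-loop matrix $\boldsymbol A_V:=\bar{\boldsymbol X}_1\boldsymbol V$. I will therefore prove the bound for the Lyapunov solution built from $\boldsymbol A_V$, as used in Theorem~\ref{thm:gradV}. To pass to the true $\boldsymbol P_{\boldsymbol K}$ I will need a perturbation step, discussed at the end.

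\textbf{Step 1 (series representation).} Since $\rho(\boldsymbol A_V)\le 1-\alpha<1$, the Lyapunov equation \eqref{lyapunovdefine} with $\boldsymbol A_V$ in place of $\boldsymbol A+\boldsymbol B\boldsymbol K$ has the unique solution
\[
\boldsymbol P=\sum_{i=0}^\infty (\boldsymbol A_V^\top)^i(\boldsymbol Q+\boldsymbol K^\top\boldsymbol R\boldsymbol K)\boldsymbol A_V^i .
\]
The weight is bounded uniformly over $\mathcal S$:
\[
\|\boldsymbol Q+\boldsymbol K^\top\boldsymbol R\boldsymbol K\|_2\le \|\boldsymbol Q\|_2+\|\boldsymbol R\|_2\bar K^2 .
\]
Hence $\|\boldsymbol P\|_2\le(\|\boldsymbol Q\|_2+\|\boldsymbol R\|_2\bar K^2)\sum_i\|\boldsymbol A_V^i\|_2^2$.

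\textbf{Step 2 (uniform geometric decay; the main obstacle).} A spectral-radius bound alone does not give a uniform bound on $\|\boldsymbol A_V^i\|_2$, because non-normal matrices can have large transients. I will get uniformity from boundedness of $\boldsymbol A_V$ itself:
\[
\|\boldsymbol A_V\|_2\le\|\bar{\boldsymbol X}_1\|_2\|\boldsymbol\Phi^{-1}\|_2\bigl\|[\boldsymbol K^\top\ \boldsymbol I]^\top\bigr\|_2\le M .
\]
Here $\|\bar{\boldsymbol X}_1\|_2\le\bar x\bar d$ by Assumption~\ref{ass:bounded_signals}, and $\|\boldsymbol\Phi^{-1}\|_2$ is bounded by persistent excitation (Assumption~\ref{ass:PE}, with $\sigma_{\min}(\boldsymbol\Phi)$ bounded below). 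So $\boldsymbol A_V$ lies in the set
\[
\{\boldsymbol A:\|\boldsymbol A\|_2\le M,\ \rho(\boldsymbol A)\le 1-\alpha\}.
\]
This set is compact. I then plan a standard Kreiss/Schur-type bound: for matrices in this set,
\[
\|\boldsymbol A^i\|_2\le c(n,M,\alpha)\,(1-\alpha/2)^i .
\]
One route is the resolvent estimate $\boldsymbol A^i=\frac{1}{2\pi j}\oint_{|z|=1-\alpha/2}z^i(z\boldsymbol I-\boldsymbol A)^{-1}dz$. On that circle every eigenvalue of $\boldsymbol A$ is at distance at least $\alpha/2$ from $z$. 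The bound $\|(z\boldsymbol I-\boldsymbol A)^{-1}\|_2\le\|\mathrm{adj}(z\boldsymbol I-\boldsymbol A)\|_2/|\det(z\boldsymbol I-\boldsymbol A)|$ then gives numerator $\le (M+1)^{n-1}$ up to constants and determinant $\ge(\alpha/2)^n$. Summing the resulting geometric series gives
\[
\sum_i\|\boldsymbol A_V^i\|_2^2\le c^2/(1-(1-\alpha/2)^2),
\]
which yields $P_{\max,2}$.

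\textbf{Step 3 (true $\boldsymbol P_{\boldsymbol K}$ and nuclear norm).} For the true closed loop I will treat $(\boldsymbol A+\boldsymbol B\boldsymbol K)-\boldsymbol A_V=-\bar{\boldsymbol W}_0\boldsymbol V$ as a perturbation of size at most $\|\bar{\boldsymbol W}_0\|_2\|\boldsymbol V\|_2$. Under the SNR condition (or for $t$ large, since $\bar{\boldsymbol W}_0\to0$), this is smaller than $1/(2\sqrt{\sum_i\|\boldsymbol A_V^i\|^2})$. A Lyapunov-based robustness argument then keeps the series bounded by, say, twice the constant, and I absorb this into $P_{\max,2}$. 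Finally, $\boldsymbol P_{\boldsymbol K}\succeq0$ is $n\times n$, so $\|\boldsymbol P_{\boldsymbol K}\|_\star=\mathrm{Tr}(\boldsymbol P_{\boldsymbol K})\le n\lambda_{\max}=n\|\boldsymbol P_{\boldsymbol K}\|_2$, which gives $P_{\max}$. I expect Step 2 to be the main difficulty: the uniform transient bound, together with obtaining a uniform bound on $\|\boldsymbol V\|_2$ from the data, is where the real work lies.
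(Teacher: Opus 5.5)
\textbf{Comparison with the paper.} Your skeleton is exactly the paper's proof. It consists of the series representation of the Lyapunov solution, the uniform weight bound $\|Q+K^\top RK\|_2\le\|Q\|_2+\bar K^2\|R\|_2$, and $\|P\|_\star=\mathrm{Tr}(P)\le n\|P\|_2$.

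The paper sets $A_K:=\bar X_1V$ and treats the Lyapunov solution of that matrix as $P_K$. It then simply asserts that $\rho(A_K)\le 1-\alpha$ on $\mathcal S$ makes $\sum_i\|A_K^i\|_2^2$ uniformly bounded. You are right that this inference needs more. The family $\begin{bmatrix}0&M\\0&0\end{bmatrix}$ has spectral radius $0$ but $\sum_i\|A^i\|_2^2=1+M^2$.

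Your Step 2 repairs this correctly:
\begin{itemize}
\item $V=\Phi^{-1}[K^\top\ I]^\top$ gives $\|\bar X_1V\|_2\le\|\bar X_1\Phi^{-1}\|_2\sqrt{1+\bar K^2}$.
\item The contour estimate with $\|(zI-A)^{-1}\|_2\le\|zI-A\|_2^{n-1}/|\det(zI-A)|$ yields $\|A^i\|_2\le(1+M)^{n-1}(2/\alpha)^n(1-\alpha/2)^{i+1}$.
\end{itemize}
For a fixed data set, uniformity over $\mathcal S$ needs only invertibility of $\Phi$. Your lower bound on $\sigma_{\min}(\Phi_t)$ matters only for uniformity in $t$. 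That bound does not follow directly from Assumption~\ref{ass:PE}, which concerns only the input Hankel matrix. It needs a robust fundamental-lemma argument as in DeePO, and that brings in the SNR condition.

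\textbf{Step 3: the threshold is wrong.} This step addresses something the paper glosses over entirely. Equation \eqref{lyapunovdefine} uses the true $A+BK=(\bar X_1-\bar W_0)V$, not $\bar X_1V$. Without a smallness condition on $\bar W_0V$, which goes beyond Assumption~\ref{ass:feasible_set}, the true closed loop need not even be stable.

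However, your stated threshold does not work. Let $S:=\sum_i\|A_V^i\|_2^2$. A perturbation of norm below $1/(2\sqrt S)$ need not preserve stability. For scalar $a=0.99$, $S\approx 50$ and $1/(2\sqrt S)\approx 0.07$, yet $a+0.05>1$.

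The threshold must scale like $1/S$. Take $P_0=\sum_i(A_V^\top)^iA_V^i$, so $\|P_0\|_2\le S$ and $\|P_0^{1/2}A_V\|_2^2=\|P_0-I\|_2\le S$. Then $(A_V+\Delta)^\top P_0(A_V+\Delta)\preceq P_0-(1-2S\|\Delta\|_2-S\|\Delta\|_2^2)I$. If $\|\Delta\|_2\le 1/(8S)$, this is $\preceq P_0-\tfrac12 I$. Hence $\sum_i((A_V+\Delta)^\top)^i(A_V+\Delta)^i\preceq 2P_0$, and $\|P_K\|_2\le 2(\|Q\|_2+\bar K^2\|R\|_2)S$.

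Replace $S$ by its uniform bound from Step 2 and state the hypothesis $\|\bar W_0\Phi^{-1}\|_2\sqrt{1+\bar K^2}\le 1/(8S)$ explicitly. With these changes, your argument is a complete and more careful version of the paper's.
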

The proof is deferred to Appendix~\ref{app:PK_bound}.

\begin{assumption}\textbf{(Covariance estimation mismatch)}
\label{ass:cov_mismatch}
We assume that the mismatch $e_{\mathrm{noise}}$ is bounded in $L_2$ norm: $\|e_{\mathrm{noise}}\|_2 \le \bar\delta_{\mathrm{noise}}$.
\end{assumption}

\begin{theorem}
\label{thm:SGD}
Suppose 
Assumptions~\ref{ass:aug_cont}--\ref{ass:cov_mismatch}
hold, and let Algorithm~\ref{alg:rank1_gd_deepo} run for 
$T:=t-t_0+1$ iterations given offline data $(\boldsymbol{X}_{0,t_0}, \boldsymbol{U}_{0,t_0},$ $ \boldsymbol{X}_{1,t_0})$,
there exist positive constants $c_i>0$, $i\in\{1,2,3,4\}$,
depending on $(\bar x,\bar u,\|\boldsymbol{R}\|_2,\sigma(\boldsymbol{Q}),$ $\sigma(\boldsymbol{R}),C^\star)$,
such that, if the stepsize satisfies $\eta\in(0,c_1]$ and
$\mathrm{SNR}\ge c_2$,  
the regret satisfies
\begin{equation}\label{eq:regret-final}
\mathrm{Regret}_T\le\frac{c_3}{\sqrt{T}}
+
c_4\,\mathrm{SNR}^{-1/2}+2P_{\max}\bar\delta_{\mathrm{noise}}.
\end{equation}
\end{theorem}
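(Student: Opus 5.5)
The proof averages the three-term decomposition \eqref{comp} over $t=t_0,\dots,t_0+T-1$ and bounds each averaged term separately. Summing \eqref{comp} and dividing by $T$ gives
\[
\mathrm{Regret}_T=\frac{1}{T}\sum_{t=t_0}^{t_0+T-1}\mathrm{Tr}\bigl(\boldsymbol P_{K_t}\boldsymbol e_{\mathrm{noise}}\bigr)
+\frac{1}{T}\sum_{t=t_0}^{t_0+T-1}\bigl(\widehat C(\boldsymbol K_t)-\widehat C^\star\bigr)
+\bigl(\widehat C^\star-C^\star\bigr),
\]
where the last term does not depend on $t$. The middle term is exactly the averaged CE regret. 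Lemma~\ref{ceregret} applies under the same stepsize condition $\eta\in(0,c_1]$ and $\mathrm{SNR}\ge c_2$, so this term is at most $c_3/\sqrt{T}+c_4\,\mathrm{SNR}^{-1/2}$. If needed, I will enlarge the constants so that they depend on $C^\star$ rather than $\hat C^\star$; this is possible because $|\hat C^\star-C^\star|\le P_{\max}\bar\delta_{\mathrm{noise}}$ by the argument below.

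For the two covariance-mismatch terms I will use the trace duality $|\mathrm{Tr}(\boldsymbol P\boldsymbol E)|\le\|\boldsymbol P\|_\star\|\boldsymbol E\|_2$, i.e.\ the nuclear norm is dual to the spectral norm. Lemma~\ref{lem:PK_bound} gives $\|\boldsymbol P_{K_t}\|_\star\le P_{\max}$ provided the iterate $\boldsymbol V'_{t}$ lies in $\mathcal S$. Assumption~\ref{ass:cov_mismatch} gives $\|\boldsymbol e_{\mathrm{noise}}\|_2\le\bar\delta_{\mathrm{noise}}$. Hence each noise-mismatch summand is at most $P_{\max}\bar\delta_{\mathrm{noise}}$, and so is its average. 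For the optimal-bias term, $\widehat C^\star-C^\star=\mathrm{Tr}(\boldsymbol P_{K^\star}\boldsymbol e_{\mathrm{noise}})$, and I apply the same bound with $\boldsymbol K^\star$. This requires $\boldsymbol K^\star$, or its lifted $\boldsymbol V^\star$, to lie in $\mathcal S$. I will argue this from the uniqueness of the lifting $\boldsymbol V=\boldsymbol\Phi^{-1}[\boldsymbol K^\top\ \boldsymbol I]^\top$ together with $\rho(\boldsymbol A+\boldsymbol B\boldsymbol K^\star)<1$, taking $\alpha$ and $\bar K$ in Assumption~\ref{ass:feasible_set} small and large enough respectively. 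Adding the three bounds yields \eqref{eq:regret-final}, with the factor $2P_{\max}\bar\delta_{\mathrm{noise}}$ coming from the two mismatch terms.

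The main obstacle is to justify that every iterate $\boldsymbol K_t=\bar{\boldsymbol U}_{0,t}\boldsymbol V'_t$ stays in the set $\mathcal S$ on which Lemma~\ref{lem:PK_bound} holds uniformly, even though the data matrices, and hence $\mathcal S$, change with $t$. I will take this invariance from the proof framework underlying Lemma~\ref{ceregret}. That argument, following Theorem~2 of~\cite{zhao2025data}, shows that for $\eta\le c_1$ and $\mathrm{SNR}\ge c_2$ the projected GD step keeps the cost below a sublevel set, $\hat C(\boldsymbol K_t)\le$ const. Since $\boldsymbol Q\succ\boldsymbol 0$ and $\hat{\boldsymbol U}_\epsilon\succ\boldsymbol 0$, this sublevel bound implies a uniform stability margin and a uniform gain bound. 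These in turn give the bound on $\|\boldsymbol P_{K_t}\|_2$ needed for Lemma~\ref{lem:PK_bound}, by the same Lyapunov-series argument used in its proof. If the sets $\mathcal S$ at different times do not match exactly, I will instead define $P_{\max}$ through this sublevel set, which leaves the final bound unchanged in form.
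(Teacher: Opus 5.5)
Your proposal matches the paper's proof: average the decomposition \eqref{comp}, bound the CE-regret term by Lemma~\ref{ceregret}, and bound each of the two covariance-mismatch terms by $\|\boldsymbol P\|_\star\|\boldsymbol e_{\mathrm{noise}}\|_2\le P_{\max}\bar\delta_{\mathrm{noise}}$ via H\"older's inequality for Schatten norms, which together give \eqref{eq:regret-final}. Your extra checks that $\boldsymbol K_t$ and $\boldsymbol K^\star$ are actually covered by Lemma~\ref{lem:PK_bound} are a sound refinement of a step the paper simply attributes to ``the definition of $P_{\max}$''; for $\boldsymbol K^\star$ you do not need to retune $\alpha,\bar K$ (which would also have to handle the fact that $\mathcal S$ constrains $\rho(\bar{\boldsymbol X}_1\boldsymbol V)$, not $\rho(\boldsymbol A+\boldsymbol B\boldsymbol K)$), because the Riccati solution is Loewner-minimal, i.e.\ $\boldsymbol P_{K^\star}\preceq\boldsymbol P_{K}$ for every stabilizing $\boldsymbol K$, so $\|\boldsymbol P_{K^\star}\|_\star\le P_{\max}$ already follows from $\mathcal S\neq\emptyset$.
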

The proof is deferred to Appendix~\ref{app:thm:SGD}.

\subsubsection{ADAM–GD Implementation of DeePO}
To improve the performance of online policy updates, we replace the lifted-variable standard gradient step in~\eqref{update2} with an ADAM-style preconditioned update~\eqref{eq:adam}--\eqref{eq:proj}, while keeping the same online recursive covariance updates as in DeePO. The resulting ADAM--DeePO procedure is summarized in Algorithm~\ref{alg:adam}. Here, the ADAM moments $m_t,v_t,\hat m_t,\hat v_t$ have the same dimension as $\boldsymbol V_t$, namely $\mathbb R^{(m+n)\times n}$. In~\eqref{eq:adam_step}, all nonlinearities act elementwise: $\sqrt{\hat v_{t+1}}$ is taken entrywise, and $\hat m_{t+1}/(\sqrt{\hat v_{t+1}}+\epsilon)$ denotes elementwise division. Accordingly, $\boldsymbol D_{t+1}:=\sqrt{\hat{\boldsymbol v}_{t+1}}+\epsilon$ is used only for elementwise scaling.

\begin{algorithm}[htbp]
\caption{ADAM--DeePO for Direct Adaptive LQR Policy Learning}
\label{alg:adam}
\KwIn{Initial stabilizing gain $\boldsymbol K_{t_0}$, $\boldsymbol m_{t_0}=\mathbf 0$, $\boldsymbol v_{t_0}=\mathbf 0$; initial stepsize $\eta_0>0$ (and a stepsize schedule $\{\eta_t\}_{t\ge t_0}$); ADAM hyperparameters $\beta_1,\beta_2\in(0,1)$ and $\epsilon>0$; offline data $(X_{0,t_0},U_{0,t_0},X_{1,t_0})$.}

\For{$t=t_0,t_0+1,\dots$}{
    Apply control $u_t=\boldsymbol K_t x_t$ and observe $x_{t+1}$;

    Form $\boldsymbol{\phi}_t:=\begin{bmatrix}\boldsymbol{u}_t^\top & \boldsymbol{x}_t^\top\end{bmatrix}^\top$;\\
    Update $\bar{\boldsymbol X}_{0,t+1}$, $\bar{\boldsymbol U}_{0,t+1}$, $\bar{\boldsymbol X}_{1,t+1}$ recursively;\\
    Update $\boldsymbol{\Phi}_{t+1}=\frac{t\boldsymbol{\Phi}_t+\boldsymbol{\phi}_t \boldsymbol{\phi}_t^\top}{t+1}$
    and compute $\boldsymbol{\Phi}^{-1}_{t+1}$.

    Given $\boldsymbol K_t$, compute lifted representation:
    \[
    \boldsymbol V_{t+1}
    =
    \boldsymbol \Phi_{t+1}^{-1}
    \begin{bmatrix}
        \boldsymbol K_t\\[2pt]
        \boldsymbol I
    \end{bmatrix}.
    \]

    Compute stochastic gradient $\boldsymbol g_{t+1}:=\nabla J_{t+1}(\boldsymbol V_{t+1})$;

    \textbf{ADAM moment updates:}
    \begin{subequations}\label{eq:adam}
    \begin{align}
        \boldsymbol m_{t+1}
        =&~ \beta_1 \boldsymbol m_t + (1-\beta_1)\boldsymbol g_{t+1}, \label{eq:adam_m}\\
        \boldsymbol v_{t+1}
        =&~ \beta_2 \boldsymbol v_t + (1-\beta_2)\big(\boldsymbol g_{t+1}\odot \boldsymbol g_{t+1}\big), \label{eq:adam_v}\\
        \hat{\boldsymbol m}_{t+1}
        =&~ \boldsymbol m_{t+1}/\big(1-\beta_1^{\,t-t_0+1}\big), \label{eq:adam_mhat}\\
        \hat{\boldsymbol v}_{t+1}
        =&~ \boldsymbol v_{t+1}/\big(1-\beta_2^{\,t-t_0+1}\big), \label{eq:adam_vhat}\\
        \tilde{\boldsymbol V}_{t+1}
        =&~ \boldsymbol V_{t+1}
           - \eta_t\,
             \frac{\hat{\boldsymbol m}_{t+1}}{\sqrt{\hat{\boldsymbol v}_{t+1}} + \epsilon} \notag\\
        =&~ \boldsymbol V_{t+1}
           - \tilde{\eta}_t\,
             \frac{\boldsymbol m_{t+1}}{\boldsymbol D_{t+1}}, \label{eq:adam_step}
    \end{align}
    \end{subequations}
    where $\tilde{\eta}_t:=\eta_t/\big(1-\beta_1^{\,t-t_0+1}\big)$ and
    $\boldsymbol D_{t+1}:=\sqrt{\hat{\boldsymbol v}_{t+1}}+\epsilon$ (elementwise).

    \textbf{Affine projection to satisfy $\bar{\boldsymbol X}_{0,t+1}\boldsymbol V=\boldsymbol I$:}
    \begin{equation}\label{eq:proj}
    \begin{aligned}
        \boldsymbol V'_{t+1}
        =
        &\tilde{\boldsymbol V}_{t+1}
        +
        \bar{\boldsymbol X}_{0,t+1}^\top
        \big(\bar{\boldsymbol X}_{0,t+1}\bar{\boldsymbol X}_{0,t+1}^\top\big)^{-1}\\
        &\Big(
            \boldsymbol I
            -
            \bar{\boldsymbol X}_{0,t+1}\tilde{\boldsymbol V}_{t+1}
        \Big).
    \end{aligned}
    \end{equation}

    Update control gain:
    \[
        \boldsymbol K_{t+1}=\bar{\boldsymbol U}_{0,t+1}\boldsymbol V'_{t+1}.
    \]
}
\end{algorithm}

\begin{assumption}
\label{ass:adam-pre}
Let $\boldsymbol D_t := \sqrt{\hat{\boldsymbol v}_t}+\epsilon$
denote the elementwise ADAM scaling matrix. There exist constants
$0<c_{\min}\le c_{\max}<\infty$ such that, for all entries $(i,j)$ and all $t$,
$c_{\min}\le (\boldsymbol D_t)_{ij}^{-1}\le c_{\max}$.
\end{assumption}

\begin{assumption}[Effective stepsize schedule]
\label{ass:adam-steps}
Let
$\tilde{\eta}_t := \eta_t/$

\noindent$(1-\beta_1^{t+1})$
denote the bias-corrected effective stepsize.
Assume that $\{\tilde{\eta}_t\}$ is positive, nonincreasing, and uniformly bounded,
i.e.,
$0 < \tilde{\eta}_t \le \tilde{\eta}_{\max}$ for all $t$.
Moreover, the stepsizes are chosen such that the cumulative stepsize grows
sublinearly with the time horizon, satisfying
$r_1 \sqrt{T}
\;\le\;
\sum_{t=t_0}^{t_0+T-1} \tilde{\eta}_t
\;\le\;
r_2 \sqrt{T}$,
for some constants $r_1,r_2>0$ and all $T\ge1$.
\end{assumption}

\begin{theorem}
\label{thm:adam-regret}
Under Assumptions~\ref{ass:aug_cont}--\ref{ass:adam-steps}, let Algorithm~\ref{alg:adam} run for $T:=t-t_0+1$ steps. Then there exist constants $d_1,d_2,d_3,d_4>0$, depending on $(\bar x,\bar u,\|\boldsymbol{R}\|_2,\sigma(\boldsymbol{Q}),\sigma(\boldsymbol{R}),\hat{C}^\star)$, such that, if $\tilde{\eta}_{\max}\in(0,d_1]$ and $\mathrm{SNR}\ge d_2$, then
\begin{equation}\label{eq:regret-adam}
\mathrm{Regret}_T
\le
\frac{d_3}{\sqrt{T}}
+
d_4\,\mathrm{SNR}^{-1/2}
+
2P_{\max}\bar{\delta}_{\mathrm{noise}}.
\end{equation}
\end{theorem}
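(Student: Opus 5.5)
We will mirror the structure used for Theorem~\ref{thm:SGD}. We start from the decomposition~\eqref{comp}, average it over $t=t_0,\dots,t_0+T-1$, and bound the three pieces separately.

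The two covariance-mismatch terms do not depend on the update rule, only on the iterates staying in $\mathcal S$. Writing the noise mismatch as $\mathrm{Tr}(\boldsymbol P_{K_t}\boldsymbol e_{\mathrm{noise}})$ and the optimal bias as $\mathrm{Tr}(\boldsymbol P_{K^\star}\boldsymbol e_{\mathrm{noise}})$, we will use $|\mathrm{Tr}(\boldsymbol P\boldsymbol E)|\le\|\boldsymbol P\|_\star\|\boldsymbol E\|_2$. Combined with Lemma~\ref{lem:PK_bound} and Assumption~\ref{ass:cov_mismatch}, each term is at most $P_{\max}\bar\delta_{\mathrm{noise}}$, which gives the additive $2P_{\max}\bar\delta_{\mathrm{noise}}$. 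This requires every ADAM iterate $\boldsymbol V'_{t+1}$ to lie in $\mathcal S$. We establish that by induction jointly with the CE-regret analysis.

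The core task is the ADAM analogue of Lemma~\ref{ceregret} for the CE-regret term. We plan to reuse the two structural properties from the framework of~\cite{zhao2025data}: local $L$-smoothness of $J_{t}$ on a sublevel set contained in $\mathcal S$, and projected gradient dominance $J_t(\boldsymbol V)-\hat C^\star\le\mu\|\boldsymbol\Pi_{\bar{\boldsymbol X}_{0,t}}\nabla J_t(\boldsymbol V)\|_F^2$. Write the step as $\boldsymbol V'_{t+1}-\boldsymbol V_{t+1}=-\tilde\eta_t\boldsymbol\Pi(\boldsymbol m_{t+1}\oslash\boldsymbol D_{t+1})$, plus a correction of order $\|\bar{\boldsymbol X}_{0,t+1}-\bar{\boldsymbol X}_{0,t}\|=O(1/t)$ from the affine projection~\eqref{eq:proj}. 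We then split $\boldsymbol m_{t+1}=(1-\beta_1)\boldsymbol g_{t+1}+\beta_1\boldsymbol m_t$. The descent lemma together with Assumption~\ref{ass:adam-pre} turns the fresh-gradient part into a decrease of at least $\tilde\eta_t(1-\beta_1)c_{\min}\|\boldsymbol\Pi\boldsymbol g_{t+1}\|_F^2$. For the momentum part, we expand $\boldsymbol m_t$ as a geometric sum of past gradients. We bound its misalignment with $\boldsymbol g_{t+1}$ by the drift $\|\boldsymbol g_{t+1}-\boldsymbol g_{s}\|\le L\|\boldsymbol V_{t+1}-\boldsymbol V_s\|+(\text{data drift})$, where the data drift is $O(\bar d^2/t)$ by the recursive covariance updates and Assumption~\ref{ass:bounded_signals}, and the iterate drift is $O(\tilde\eta_{\max}c_{\max}G)$ with $G$ a uniform gradient bound on the sublevel set. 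This yields an inequality of the form
\[
J_{t+1}(\boldsymbol V'_{t+1})-\hat C^\star\le\bigl(1-\tilde\eta_t\kappa\bigr)\bigl(J_{t}(\boldsymbol V_t')-\hat C^\star\bigr)+\tilde\eta_t^2 c_5+\tilde\eta_t c_6\,\mathrm{SNR}^{-1/2}+\frac{c_7}{t},
\]
with $\kappa\propto(1-\beta_1)c_{\min}/\mu$, for $\tilde\eta_{\max}\le d_1$ small enough relative to $\beta_1/(1-\beta_1)$, $c_{\max}/c_{\min}$ and $L$. The $\mathrm{SNR}^{-1/2}$ term is inherited as in Lemma~\ref{ceregret}: the change from $J_t$ to $J_{t+1}$ (perturbation of $\bar{\boldsymbol X}_1$ by $\bar{\boldsymbol W}_0$) is bounded through Assumption~\ref{ass:PE} and $\mathrm{SNR}\ge d_2$. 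The same recursion keeps $J$ below the initial level, so the iterates stay in the sublevel set and hence in $\mathcal S$. This closes the induction.

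Finally, we sum the recursion. Rearranging gives $\tilde\eta_t\kappa(J_t-\hat C^\star)\le\Delta_t-\Delta_{t+1}+\tilde\eta_t^2c_5+\dots$. Because $\tilde\eta_t$ is nonincreasing, dividing by $\tilde\eta_t$ and telescoping (Abel summation) gives $\sum_t(J_t-\hat C^\star)\lesssim\Delta_{t_0}/\tilde\eta_{T}+\sum_t\tilde\eta_t+T\,\mathrm{SNR}^{-1/2}+\log T$. Assumption~\ref{ass:adam-steps} makes $\sum\tilde\eta_t=O(\sqrt T)$ and forces $\tilde\eta_T\gtrsim1/\sqrt T$, so dividing by $T$ gives $d_3/\sqrt T+d_4\mathrm{SNR}^{-1/2}$. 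Adding the mismatch bound yields~\eqref{eq:regret-adam}. The main obstacle is the momentum/preconditioning step. The elementwise division by $\boldsymbol D_{t+1}$ is not a symmetric positive-definite preconditioner commuting with $\boldsymbol\Pi$, so the descent direction $\boldsymbol\Pi(\boldsymbol g\oslash\boldsymbol D)$ need not align with $\boldsymbol\Pi\boldsymbol g$. We would first try to treat $\oslash\boldsymbol D$ as a diagonal operator on $\mathrm{vec}(\boldsymbol V)$ with spectrum in $[c_{\min},c_{\max}]$. We would then bound $\langle\boldsymbol\Pi\boldsymbol g,\boldsymbol\Pi(\boldsymbol g\oslash\boldsymbol D)\rangle\ge c_{\min}\|\boldsymbol\Pi\boldsymbol g\|^2-(c_{\max}-c_{\min})\|\boldsymbol\Pi\boldsymbol g\|\|(\boldsymbol I-\boldsymbol\Pi)\boldsymbol g\|$. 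If this cross term cannot be absorbed, we would use gradient dominance in the unprojected form, with the constant inflated by $c_{\max}/c_{\min}$.
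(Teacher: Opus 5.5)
Your skeleton matches the paper's: decomposition~\eqref{comp}, the Schatten--H\"older bound with Lemma~\ref{lem:PK_bound} and Assumption~\ref{ass:cov_mismatch} for the two mismatch terms, an ADAM analogue of the DeePO one-step descent, and then DeePO's summation. Your sublevel-set induction and the Abel summation under Assumption~\ref{ass:adam-steps} are in fact more explicit than the paper, which only says the rest ``follows exactly'' as in the constant-stepsize GD proof. The routes differ at the decisive step. The paper does not derive alignment of the ADAM direction from momentum and drift estimates. It imports \eqref{eq:adam-inner-bounds}, i.e.\ $\langle\nabla J_t,\boldsymbol D_t\widehat{\boldsymbol m}_t\rangle\ge h_1\|\nabla J_t\|_F^2$ and $\|\boldsymbol D_t\widehat{\boldsymbol m}_t\|_F^2\le h_2\|\nabla J_t\|_F^2$, from Lemma~22 of~\cite{zou2019sufficient}, and absorbs $h_1,h_2$ into the gradient-dominance and smoothness constants. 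Your derivation is meant to replace that citation, and it breaks at exactly the point you flag.

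\textbf{The gap.} Because $\boldsymbol\Pi$ is an orthogonal projector, the first-order term in the descent lemma for the step $-\tilde\eta_t\boldsymbol\Pi(\boldsymbol m_{t+1}\oslash\boldsymbol D_{t+1})$ is governed by $\langle\boldsymbol\Pi\boldsymbol g,\boldsymbol g\oslash\boldsymbol D\rangle=\langle\boldsymbol\Pi\boldsymbol g,(\boldsymbol\Pi\boldsymbol g)\oslash\boldsymbol D\rangle+\langle\boldsymbol\Pi\boldsymbol g,((\boldsymbol I-\boldsymbol\Pi)\boldsymbol g)\oslash\boldsymbol D\rangle$. The normal part $(\boldsymbol I-\boldsymbol\Pi)\boldsymbol g$ is the Lagrange-multiplier part of $\nabla J_t$, and it is not small near the optimum. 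On the feasible set, $\langle\boldsymbol V,\nabla J(\boldsymbol V)\rangle=2\,\mathrm{Tr}\bigl((\boldsymbol P_V-\boldsymbol Q)\hat{\boldsymbol U}_K\bigr)>0$ (barring $\boldsymbol K=\boldsymbol 0$, $\bar{\boldsymbol X}_1\boldsymbol V=\boldsymbol 0$), so $\boldsymbol g\neq\boldsymbol 0$ at the constrained minimizer even though $\boldsymbol\Pi\boldsymbol g=\boldsymbol 0$ there. Entrywise scaling varies across rows, so it does not commute with left multiplication by $\boldsymbol\Pi$. Generically, therefore, $\boldsymbol\Pi(\boldsymbol g\oslash\boldsymbol D)\neq\boldsymbol 0$ at the minimizer: the projected ADAM step does not vanish at the optimum, and your claimed fresh-gradient decrease $\tilde\eta_t(1-\beta_1)c_{\min}\|\boldsymbol\Pi\boldsymbol g_{t+1}\|_F^2$ is unjustified.

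\textbf{Why your fallbacks fail.} Fallback (i), after Young's inequality, leaves a per-step error of order $\tilde\eta_t(c_{\max}-c_{\min})^2c_{\min}^{-1}\|(\boldsymbol I-\boldsymbol\Pi)\boldsymbol g\|_F^2$. After your division by $\tilde\eta_t$ this becomes an $O(1)$ bias in the averaged CE regret, which is neither $O(T^{-1/2})$ nor $O(\mathrm{SNR}^{-1/2})$. Fallback (ii) does not help either. Gradient dominance, projected or not, upper-bounds $J_t-\hat C^\star$, whereas you need a positive lower bound on $\langle\boldsymbol\Pi\boldsymbol g,\boldsymbol g\oslash\boldsymbol D\rangle$. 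That quantity can vanish or turn negative while $\|\boldsymbol g\|_F$ stays bounded away from zero.

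\textbf{What is missing.} You need an alignment inequality for the projected direction, such as $\langle\boldsymbol\Pi\boldsymbol g_{t+1},\boldsymbol m_{t+1}\oslash\boldsymbol D_{t+1}\rangle\ge h_1\|\boldsymbol\Pi\boldsymbol g_{t+1}\|_F^2$, and Assumption~\ref{ass:adam-pre} does not imply it. The paper's \eqref{eq:adam-smooth-step}--\eqref{eq:adam-inner-bounds} are written for the unprojected $\nabla J_t$, with \eqref{eq:proj} absent, so its route presupposes this alignment rather than proving it. Closing the argument requires either taking it as a hypothesis or modifying the update so the scaling is compatible with $\boldsymbol\Pi$: projecting in the $\boldsymbol D_{t+1}$-weighted metric, running ADAM on null-space coordinates of $\bar{\boldsymbol X}_{0,t+1}$, or using a column-wise scaling.

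\textbf{Two smaller points.} First, your $O(1/t)$ projection correction is identically zero, since $\bar{\boldsymbol X}_{0,t+1}\boldsymbol V_{t+1}=\boldsymbol I$ by construction. Second, the momentum drift must be bounded with local stepsizes, $\sum_k\beta_1^k\tilde\eta_{t-k}=O(\tilde\eta_t)$. This holds because Assumption~\ref{ass:adam-steps} forces $\tilde\eta_t\asymp(t-t_0+1)^{-1/2}$. The bound $O(\tilde\eta_{\max}c_{\max}G)$ as you wrote it would leave a further $O(\tilde\eta_{\max})$ bias after dividing by $\tilde\eta_t$.
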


The proof is deferred to Appendix~\ref{app:thm:adam-regret}.

In practice, the proposed update improves online learning performance without sacrificing performance guarantees.
Moreover, Assumption~\ref{ass:noise_bounded} is adopted for analytical clarity; extending the result to light-tailed stochastic disturbances is left for future work.

\section{Simulation}
This section evaluates the proposed method on an industrial-park DHS in Northern China~\cite{yi2023energy} under realistic model mismatch and stochastic disturbances. The DHS study considers both static and time-varying parameter perturbations to emulate practical uncertainty in mass flow rates and heat-transfer characteristics. Overall, the results show that the proposed method achieves stable near-optimal DHS operation and improved robustness under these uncertain operating conditions. Additional mechanism validation on a standard three-dimensional benchmark~\cite{zhao2025data}, illustrating the effects of disturbance-covariance estimation and ADAM-based updates under stochastic excitation, is reported in Appendix~\ref{app:3dbenchmark}.

\subsection{Industrial DHS in Northern China}
We consider an industrial-park DHS in Northern China with three producers and eight loads~\cite{yi2023energy}, yielding an augmented system with $n=22$ states and $m=11$ inputs. The simulations are performed with a sampling interval of $\tau=0.1\,\mathrm{s}$. We apply Gaussian process noise $\boldsymbol{w}_t = 0.0042\,\boldsymbol{v}^{(w)}_t~(\mathrm{kW})$ and exploration noise $\boldsymbol{w}_{s,t} = 0.1\,\boldsymbol{v}^{(s)}_t~(\mathrm{kW})$, where $\boldsymbol{v}^{(w)}_t,\boldsymbol{v}^{(s)}_t \sim \mathcal{N}(\boldsymbol{0},\boldsymbol{I}_{11})$ are independent. The former captures stochastic demand fluctuations and unmodeled thermal effects, while the latter provides excitation for online learning.

We compare GD-DeePO and ADAM-DeePO under model discrepancy by initializing the controller with an available design model rather than the exact current plant. In practice, such a model may come from a model estimated at an earlier operating condition, or an engineering model stored from historical records. Because DHS operating conditions can vary over time, the model available for controller design may differ from the actual closed-loop dynamics. We first consider a static discrepancy setting to isolate the effect of initialization error and online correction in a controlled manner. The real system is $\boldsymbol{A}_{\mathrm{real}}=\begin{bmatrix}\boldsymbol{I}-\tau\boldsymbol{A}_1 & \boldsymbol{0}\\ \boldsymbol{C}_T & \boldsymbol{I}\end{bmatrix}$, whereas the model used to compute the initial feedback gain $\boldsymbol{K}_0$ is $\boldsymbol{A}_{\mathrm{init}}=\begin{bmatrix}\boldsymbol{I}-(1+\delta_e)\tau\boldsymbol{A}_1 & \boldsymbol{0}\\ \boldsymbol{C}_T & \boldsymbol{I}\end{bmatrix}$. Here, $\delta_e$ quantifies the discrepancy in the thermal time scale between the design model and the actual plant. A positive $\delta_e$ means that the design model assumes faster thermal dynamics than those of the plant, whereas a negative $\delta_e$ corresponds to slower assumed dynamics. The controller is initialized using the LQR solution of $\boldsymbol{A}_{\mathrm{init}}$, after which DeePO updates the feedback gain $\boldsymbol{K}$ online from closed-loop data generated by $\boldsymbol{A}_{\mathrm{real}}$.

Beyond this static discrepancy setting, we also consider time-varying model perturbations. Specifically, the thermal time-scale perturbation $\delta_e$ is augmented by a zero-mean bounded time-varying component $\delta_e^c(t)$, yielding the effective perturbation $\delta_e(1+\delta_e^c(t))$. This emulates variations in operating conditions such as changes in mass flow rates and heat-transfer characteristics, and results in a linear time-varying system whose dynamics deviate persistently from the design model used to compute $\boldsymbol{K}_0$.

\subsection{Near-optimal and stable DHS operation}
We next illustrate the closed-loop operation of ADAM-DeePO under a $40\%$ static model discrepancy and a constant heat demand of $10\,\mathrm{MW}$. Figures~\ref{fig:T_traj}--\ref{fig:hG_traj} show a representative run. The representative trajectories show that the proposed controller maintains well-behaved closed-loop operation throughout the process. The temperature and heat-generation trajectories remain bounded and approach steady operating levels, while the optimality error decreases rapidly after the initial transient and remains close to zero during online learning. These results show that ADAM-DeePO can improve the feedback policy online while maintaining stable DHS operation under static model discrepancy.

\vspace{-0.2cm}
\begin{center}
\hspace{-0.2cm}
    \includegraphics[width=1.0\linewidth]{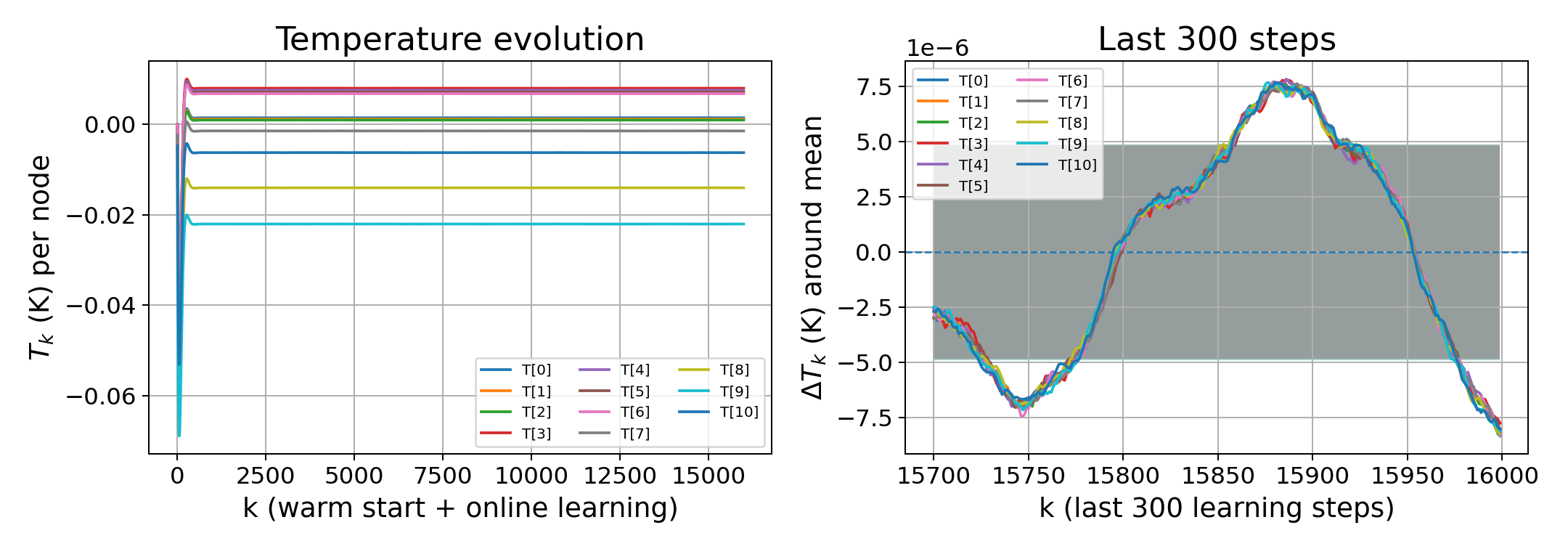}
    \vspace{-0.4cm}
    \captionof{figure}{Temperature evolution under ADAM-DeePO.}
    \label{fig:T_traj}
\end{center}
\vspace{-0.7cm}
\begin{center}
    \includegraphics[width=0.9\linewidth]{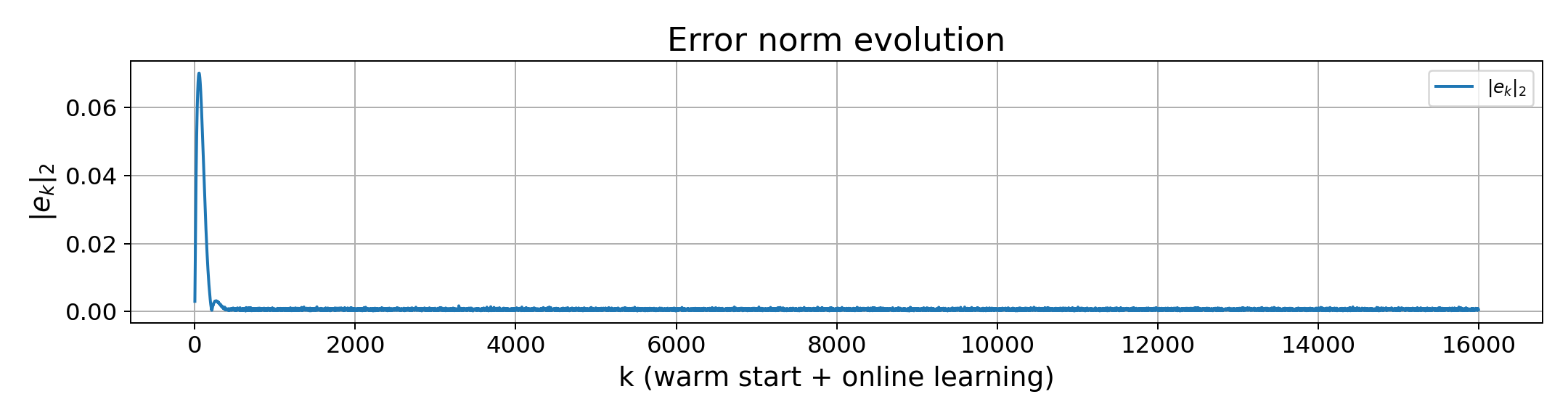}
    \vspace{-0.5cm}
    \captionof{figure}{Optimality error evolution under ADAM-DeePO.}
    \label{fig:e_traj}
\end{center}

\vspace{-0.5cm}
\begin{center}
    \includegraphics[width=1.0\linewidth]{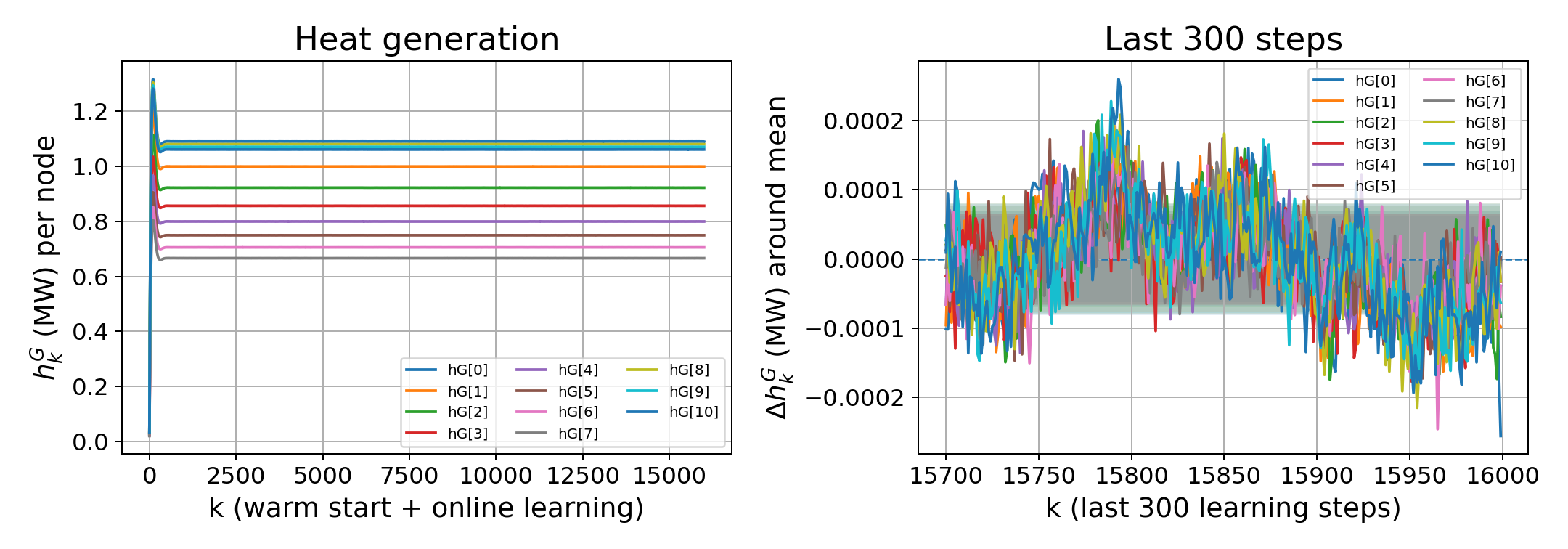}
    \vspace{-0.6cm}
    \captionof{figure}{Heat generation under ADAM-DeePO.}
    \label{fig:hG_traj}
\end{center}
\vspace{-0.2cm}

To assess reproducibility, we repeated the same experiment over $5$ random seeds. The relative cost error is substantially reduced from its initial value $\frac{|C(\boldsymbol{K}_0)-C^\star|}{C^\star}=1.697\times10^{-2}$ to a final value $\frac{|C(\boldsymbol{K}_t)-C^\star|}{C^\star}$ with mean $9.606\times10^{-4}$ and standard deviation $8.531\times10^{-4}$ across seeds. Moreover, the final optimality error and control increment remain small, with mean $\|e_k\|_2=4.923\times10^{-4}$ and mean $\|u_k\|_2=3.354\times10^{-4}$ over the $5$ runs. These results indicate that ADAM-DeePO can reliably learn a near-optimal controller online and maintain stable and economically efficient DHS operation under model discrepancy.

\subsection{Comparison of ADAM-DeePO and nominal MPC}
We compare ADAM-DeePO with a nominal MPC baseline under the same $40\%$ static model discrepancy and the same stochastic heat-demand disturbance, using the same representative seed as in Figures~\ref{fig:T_traj}--\ref{fig:hG_traj}. The MPC controller is also designed from the augmented system \eqref{augmentd}, and therefore uses the same error state $e_k$ and targets the same economic objective. In this single-hour constant-load setting, MPC is implemented in receding-horizon fashion around a constant heat demand of $10\,\mathrm{MW}$.

Figures~\ref{fig:hg_mpc} and~\ref{fig:error_mpc} show that MPC yields bounded heat-generation trajectories and drives the optimality error to a small neighborhood of the desired steady state, confirming the usefulness of the design in \eqref{augmentd}. However, because the receding-horizon optimization relies on a fixed discrepant prediction model, the error cannot be fully removed under model discrepancy. By contrast, ADAM-DeePO updates the feedback policy directly from closed-loop data and can therefore progressively compensate for the discrepancy.

\vspace{-0.3cm}
\begin{center}
    \includegraphics[width=1.0\linewidth]{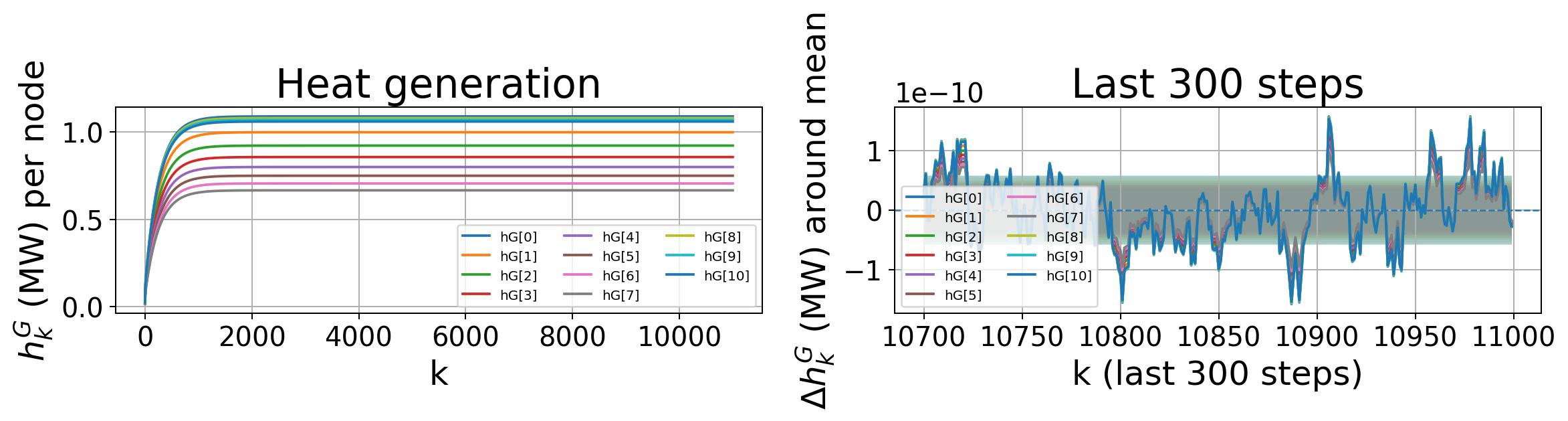}
    \vspace{-0.6cm}
    \captionof{figure}{Heat generation trajectories under nominal MPC.}
    \label{fig:hg_mpc}
\end{center}

\vspace{-0.5cm}
\begin{center}
    \includegraphics[width=1.0\linewidth]{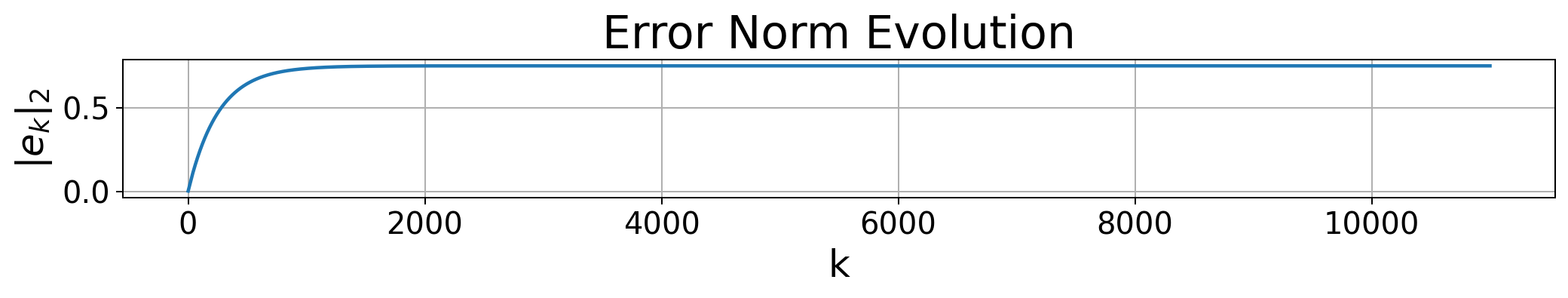}
    \vspace{-0.8cm}
    \captionof{figure}{Optimality error evolution under nominal MPC.}
    \label{fig:error_mpc}
\end{center}
\vspace{-0.2cm}

We further compare the two methods using steady operating metrics aligned with \textbf{E1}--\textbf{E2}, where the steady average generation and temperature costs are evaluated by \eqref{5a} and \eqref{4a}, respectively, over a terminal averaging window. Since both controllers are based on the augmented formulation, the main difference is whether the control law can adapt online to model discrepancy. Under this metric, ADAM-DeePO and nominal MPC achieve almost identical steady average generation cost, indicating similar aggregate heat-supply performance, but their temperature-regulation performance differs substantially. Both methods also satisfy the final heat-balance condition accurately, with balance gaps below $10^{-5}\,\mathrm{MW}$ in magnitude, whereas ADAM-DeePO reaches the economic neighborhood much faster: $37.7\,\mathrm{s}$ versus $1069.8\,\mathrm{s}$ for MPC. These results are consistent with the trajectory plots and show the benefit of online policy adaptation under model discrepancy.

\subsection{Operation under a realistic time-varying heat-demand profile}
We evaluate the learned controller under a realistic time-varying heat disturbance constructed from a real-world hourly heat-demand profile~\cite{yi2023energy}. The full simulation is carried out on the original physical timescale with $\tau=0.1\,\mathrm{s}$ over 24 physical hours. In each hour, the simulation consists of a warm-start phase of $1300$ steps, an online learning phase of $10000$ steps, and a deployment phase of $24700$ steps, giving $36000$ steps per hour in total. This hourly procedure is repeated across the 24-hour demand profile. For visual clarity, Figure~\ref{fig:realload_gen_demand} displays only the deployment phase, with the horizontal axis shown in compressed form while preserving the underlying daily variation pattern.

\vspace{-0.3cm}
\begin{center}
    \includegraphics[width=1.0\linewidth]{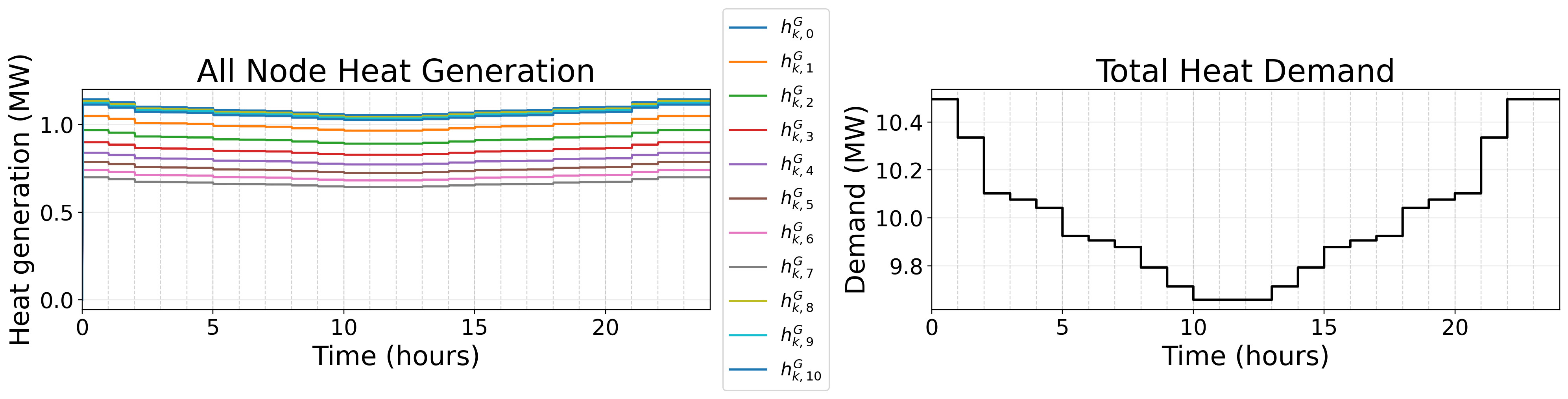}
    \vspace{-0.8cm}
    \captionof{figure}{Heat-generation trajectories under a realistic $24$-hour heat-demand profile (left), and the corresponding total heat-demand profile (right).}
    \label{fig:realload_gen_demand}
\end{center}
\vspace{-0.6cm}
\begin{center}
    \includegraphics[width=0.9\linewidth]{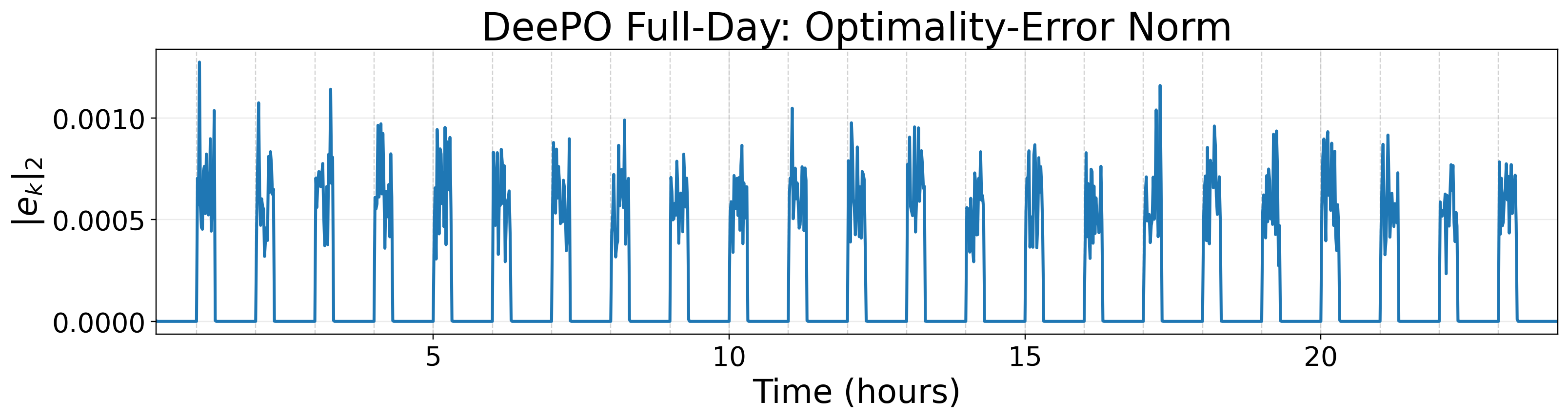}
    \vspace{-0.4cm}
    \captionof{figure}{Evolution of $\|e_k\|_2$.}
    \label{fig:realload_error}
\end{center}
\vspace{-0.2cm}
Figure~\ref{fig:realload_gen_demand} shows that the heat-generation inputs vary smoothly with the demand profile during deployment. Figure~\ref{fig:realload_error} shows the evolution of $\|e_k\|_2$ over the same period. Although the optimality error increases briefly when the hourly demand level changes, it quickly returns to a small neighborhood of zero after each transition. This indicates that the learned controller remains stable and economically consistent under realistic time-varying heat demand.

\subsection{Comparison of ADAM-DeePO and GD-DeePO}
Tables~\ref{modelvariation} and~\ref{modelvariation2} report the relative operating-cost error over five random seeds under different levels of static model discrepancy, using the sample mean and standard deviation (std). The improvement ratio is computed as $\mathrm{IMP}=(\mathrm{mean}_{\mathrm{GD}}-\mathrm{mean}_{\mathrm{ADAM}})/\mathrm{mean}_{\mathrm{GD}}\times 100\%$. Both GD-DeePO and ADAM-DeePO converge to near-optimal solutions, but ADAM-DeePO consistently achieves a lower mean relative cost error with comparable or slightly smaller variability across seeds. These results show that adaptive moment-based scaling improves online learning and closed-loop performance under model discrepancy.

\subsection{Robustness to time-varying thermal dynamics}
Table~\ref{modelvariationtimevary_40} reports the relative operating-cost error over five random seeds under time-varying thermal-model perturbations with a fixed static model discrepancy of $\delta_e=40\%$. For each setting, we report the sample mean and standard deviation (std), and compute the improvement ratio (IMP) from the mean errors of GD-DeePO and ADAM-DeePO. ADAM-DeePO consistently achieves a lower mean relative cost error than GD-DeePO across all tested perturbation levels. As the perturbation magnitude increases, the relative cost error of both methods increases substantially, which is consistent with the fact that larger time-varying perturbations drive the real dynamics further away from the design model.

\begin{table}[htbp]
\centering
\caption{Operating-cost error under static moderate discrepancy.}
\label{modelvariation}
\begin{tabular}{p{1.8cm}|p{1.2cm}|p{1.2cm}|p{1.2cm}|p{1.2cm}}
\hline
$\delta_e$ & \textbf{-40\%} & \textbf{-20\%} & \textbf{20\%} & \textbf{40\%} \\
\hline
\textbf{GD mean}   & 6.256e-2 & 3.762e-2 & 1.347e-2 & 6.732e-3 \\
\textbf{GD std}    & 1.950e-3 & 1.578e-3 & 1.272e-3 & 1.213e-3 \\
\textbf{ADAM mean} & 5.490e-2 & 3.011e-2 & 6.114e-3 & 9.606e-4 \\
\textbf{ADAM std}  & 1.840e-3 & 1.501e-3 & 1.248e-3 & 8.531e-4 \\
\textbf{IMP (mean)}       & 12.25\%  & 19.96\%  & 54.61\%  & 85.73\% \\
\hline
\end{tabular}
\end{table}

\begin{table}[htbp]
\centering
\caption{Operating-cost error under static mild discrepancy.}
\label{modelvariation2}
\begin{tabular}{p{1.8cm}|p{1.2cm}|p{1.2cm}|p{1.2cm}|p{1.2cm}}
\hline
$\delta_e$ & \textbf{-1\%} & \textbf{1\%} & \textbf{-2\%} & \textbf{2\%} \\
\hline
\textbf{GD mean}   & 2.361e-2 & 2.245e-2 & 2.421e-2 & 2.189e-2 \\
\textbf{GD std}    & 1.387e-3 & 1.373e-3 & 1.395e-3 & 1.366e-3 \\
\textbf{ADAM mean} & 1.619e-2 & 1.504e-2 & 1.678e-2 & 1.448e-2 \\
\textbf{ADAM std}  & 1.337e-3 & 1.326e-3 & 1.344e-3 & 1.320e-3 \\
\textbf{IMP (mean)}       & 31.43\%  & 33.01\%  & 30.67\%  & 33.84\% \\
\hline
\end{tabular}
\end{table}

\begin{table}[htbp]
\centering
\caption{Relative operating-cost error under time-varying model perturbations with fixed static model discrepancy $\delta_e=40\%$.}
\label{modelvariationtimevary_40}
\begin{tabular}{p{1.8cm}|p{1.3cm}|p{1.3cm}|p{1.3cm}}
\hline
$\delta_e^c$ & \textbf{10\%} & \textbf{30\%} & \textbf{50\%} \\
\hline
\textbf{GD mean}   & 3.890e-3 & 1.607e-2 & 4.864e-2 \\
\textbf{GD std}    & 2.245e-3 & 6.745e-3 & 1.695e-2 \\
\textbf{ADAM mean} & 3.607e-3 & 1.459e-2 & 4.716e-2 \\
\textbf{ADAM std}  & 2.980e-3 & 6.702e-3 & 1.694e-2 \\
\textbf{IMP}       & 7.26\%   & 9.21\%   & 3.04\% \\
\hline
\end{tabular}
\end{table}

\section{Conclusion}
This paper develops a data-driven online control framework for DHSs by embedding steady-state economic optimality conditions into the system dynamics. Based on DeePO, the resulting controller enables online learning of near-optimal regulation policies under stochastic disturbances and model mismatch, while providing convergence to an optimal control policy, together with closed-loop stability guarantees. An ADAM-enhanced variant is further developed to improve the performance of online policy updates. Simulations on an industrial-scale DHS show that the proposed method achieves stable near-optimal operation and strong empirical robustness to both static and time-varying model perturbations under different heat-demand disturbance conditions. These results suggest that the proposed data-driven online control is a promising approach for practical DHS operation, especially in large-scale systems where accurate models are difficult to maintain, disturbance forecasts are unreliable, and strong closed-loop guarantees are desired.

\section*{CRediT authorship contribution statement}
\textbf{Xinyi Yi:} Conceptualization, Methodology, Software, Validation, Formal analysis, Visualization, Writing -- original draft, Writing -- review \& editing. 
\textbf{Ioannis Lestas:} Supervision, Formal analysis, Writing -- review \& editing.
\section*{Declaration of competing interest}
The authors declare that they have no known competing financial interests or personal relationships that could have appeared to influence the work reported in this paper.

\appendix
\section{Proof of Theorem \ref{optimalitycondition}}\label{app:optimalitycondition}
The Lagrangian for $\textbf{E1}$ is given by:
$L = \frac{1}{2} \boldsymbol{h^G}^{\top} \boldsymbol{F^G} \boldsymbol{h^G}$

\noindent$ + \boldsymbol{\mu}^{\top} (\boldsymbol{-A_1 T}+ \boldsymbol{B_1}\boldsymbol{h^G}-\boldsymbol{B_2\bar{h}^L})$, where $\boldsymbol{\mu}$ is the dual variable associated with \eqref{5b}.
The KKT conditions yield:
$\frac{\partial L}{\partial \boldsymbol{h^G}} = \boldsymbol{F^G h^G} + \boldsymbol{B_1^\top\mu} = \boldsymbol{0},
\frac{\partial L}{\partial \boldsymbol{T}} = \boldsymbol{A_1}^{\top} \boldsymbol{\mu} = \boldsymbol{0}$, where $\boldsymbol{\mu}=z\boldsymbol{1}$. The conditions can be rewritten as $\boldsymbol{F^G {h^G}^\star = }z\boldsymbol{1}$ such that $F^G_i {h^G_{i}}^\star - F^G_j{h^G_{j}}^\star = 0$
for any $i, j \in \boldsymbol{\mathcal{G}}$, establishing $\boldsymbol{F^M {h^{G}}^\star} = \boldsymbol{0}$ as the optimality condition for the unique solution of $\boldsymbol{{h^G}^\star}$. The Lagrangian for $\textbf{E2}$ is:
$\mathcal{L}(\boldsymbol{T}, z, \boldsymbol{\lambda}) = \frac{1}{2} \boldsymbol{T^\top F^D T} + \boldsymbol{\lambda}^\top (\boldsymbol{T}- \boldsymbol{A_1^\dagger} (\boldsymbol{B_1} \boldsymbol{{h^G}^\star}  \boldsymbol{-B_2\bar{h}^L}) - z\boldsymbol{1})$.
The KKT conditions for $\textbf{E2}$ are (\ref{4b}) together with: $\frac{\partial \mathcal{L}}{\partial \boldsymbol{T}} = \boldsymbol{F^D T + \lambda = 0}, \frac{\partial \mathcal{L}}{\partial z} = -\boldsymbol{\lambda}^\top \boldsymbol{1} = 0$, which gives $\boldsymbol{1^T F^DT^\star=0}$. The optimality condition (\ref{4b}) ensures that $\boldsymbol{{T^\star}}$ satisfy the optimal condition of \textbf{E1}. Given that $\boldsymbol{F^D}$ is positive definite, $\textbf{E2}$ constitutes a convex optimization problem over $\boldsymbol{T}$, thereby confirming unique $\boldsymbol{{T^\star}}$ and that these optimality conditions are both necessary and sufficient.

\section{Proof of Theorem \ref{thm:gradV}}\label{app:proof_gradV}
The proof follows the same differential argument as Lemma~2 in DeePO~\cite{zhao2025data}. In particular, using
$
J(\boldsymbol{V})
=
\mathrm{Tr}(
(\boldsymbol{Q}+\boldsymbol{V}^\top \boldsymbol{\bar U}_0^\top \boldsymbol{R}\boldsymbol{\bar U}_0 \boldsymbol{V})$

\noindent$\boldsymbol{\hat U}_K)
=
\mathrm{Tr}(\boldsymbol{P}_V\boldsymbol{\hat U}_\epsilon),
$
the same recursive differentiation of $\boldsymbol{P}_V$ as in the DeePO proof yields
$dJ=2\mathrm{Tr}\!\left(\boldsymbol{\hat U}_K \boldsymbol{E}_V^\top d\boldsymbol{V}\right),$
where
$
\boldsymbol{E}_V
=
(\boldsymbol{\bar U}_0^\top \boldsymbol{R}\boldsymbol{\bar U}_0
+
\boldsymbol{\bar X}_1^\top \boldsymbol{P}_V \boldsymbol{\bar X}_1)\boldsymbol{V}.
$
Hence,
$
\nabla_{\boldsymbol{V}} J(\boldsymbol{V})
=
2\boldsymbol{E}_V\boldsymbol{\hat U}_K
=
2(
\boldsymbol{\bar U}_0^\top \boldsymbol{R}\boldsymbol{\bar U}_0$
$+
\boldsymbol{\bar X}_1^\top \boldsymbol{P}_V \boldsymbol{\bar X}_1)\boldsymbol{V}\boldsymbol{\hat U}_K.
$
The linear constraint~\eqref{diop2c} is enforced separately by projection in the algorithmic update and therefore does not appear explicitly in the gradient expression.

\section{Proof of Lemma \ref{lem:PK_bound}} \label{app:PK_bound}
Fix $\boldsymbol V\in\mathcal S$ and let $\boldsymbol A_{\boldsymbol K}:=\bar{\boldsymbol X}_1\boldsymbol V$.
By Assumption~\ref{ass:feasible_set}, $\rho(\boldsymbol A_{\boldsymbol K})\le 1-\alpha$ and
$\|\boldsymbol K\|_2\le \bar K$.
The Lyapunov equation
$\boldsymbol P_{\boldsymbol K}
=
\sum_{i=0}^{\infty}(\boldsymbol A_{\boldsymbol K}^\top)^i
(\boldsymbol Q+\boldsymbol K^\top \boldsymbol R \boldsymbol K)
(\boldsymbol A_{\boldsymbol K})^i$
implies $\|\boldsymbol P_{\boldsymbol K}\|_2
\le
\sum_{i=0}^{\infty}\|\boldsymbol A_{\boldsymbol K}^i\|_2^2
\big(\|\boldsymbol Q\|_2$

\noindent$+\|\boldsymbol K\|_2^2\|\boldsymbol R\|_2\big)
\le
\Big(\sum_{i=0}^{\infty}\|\boldsymbol A_{\boldsymbol K}^i\|_2^2\Big)
\big(\|\boldsymbol Q\|_2+\bar K^2\|\boldsymbol R\|_2\big)$.
Since $\rho(\boldsymbol A_{\boldsymbol K})\le 1-\alpha$ uniformly on $\mathcal S$,
the series $\sum_{i\ge0}\|\boldsymbol A_{\boldsymbol K}^i\|_2^2$ is uniformly bounded, hence
$\|\boldsymbol P_{\boldsymbol K}\|_2\le P_{\max,2}$ for all $\boldsymbol V\in\mathcal S$.

\section{Proof of Theorem \ref{thm:SGD}}\label{app:thm:SGD}
By Assumption~\ref{ass:cov_mismatch}, we have $\|e_{\mathrm{noise}}\|_2\le \bar\delta_{\mathrm{noise}}$.
Moreover, with the definition of $P_{\max}$, by Hölder's inequality for Schatten norms \cite[(1.174)]{watrous2018theory},
it holds that $|C(\boldsymbol K_t)-\hat C(\boldsymbol K_t)|=
\Big|\mathrm{Tr}\!\left(\boldsymbol{P}_{K_t}\, \boldsymbol{e_{\mathrm{noise}}}\right)\Big|
\le
\|\boldsymbol{P}_{K_t}\|_\star\,\|\boldsymbol{e_{\mathrm{noise}}}\|_2
\le
P_{\max}\bar\delta_{\mathrm{noise}}$. 

Similarly, $|\hat C^\star-C^\star|
=
\Big|\mathrm{Tr}\!\left(\boldsymbol{P}_{K^\star} \boldsymbol{e_{\mathrm{noise}}}\right)\Big|
\le
\|\boldsymbol{P}_{K^\star}\|_\star\|\boldsymbol{e_{\mathrm{noise}}}\|_2
\le
P_{\max}\bar\delta_{\mathrm{noise}}$.
Thus, $\frac{1}{T}\sum_{t=t_0}^{t_0+T-1}
(|C(\boldsymbol{K_t})-\hat C(\boldsymbol{K_t})|
+
|\hat C^\star-C^\star|)
\le
2P_{\max}\bar\delta_{\mathrm{noise}}$.
Combined with (\ref{comp},\ref{ceregretori}), we obtain (\ref{eq:regret-final}).

\section{Proof of Theorem \ref{thm:adam-regret}}\label{app:thm:adam-regret}
The proof follows the same descent-based argument as the GD-DeePO regret analysis in Theorem \ref{thm:SGD}, with the standard gradient step replaced by the ADAM-preconditioned update. By the local smoothness property of $J_t$ in DeePO\cite{zhao2025data}, the one-step change of $J_t$ along the ADAM update direction satisfies
\begin{equation}\label{eq:adam-smooth-step}
\begin{aligned}
J_t(\boldsymbol{V}_{t+1})-J_t(\boldsymbol{V}_t)
\le
&-\tilde{\eta}_t\big\langle \nabla J_t(\boldsymbol{V}_t), \boldsymbol{D}_t\widehat{\boldsymbol m}_t \big\rangle\\
&+
\frac{l(J_t(\boldsymbol{V}_t))}{2}\tilde{\eta}_t^2
\|\boldsymbol{D}_t\widehat{\boldsymbol m}_t\|_F^2.
\end{aligned}
\end{equation}

Under Assumptions~\ref{ass:adam-pre}--\ref{ass:adam-steps}, Lemma~22 in~\cite{zou2019sufficient} implies that the preconditioned ADAM direction remains sufficiently aligned with the true gradient and has bounded magnitude relative to it. Hence, there exist constants $h_1,h_2>0$ such that
\begin{equation}\label{eq:adam-inner-bounds}
\begin{aligned}
&\big\langle \nabla J_t(\boldsymbol{V}_t), \boldsymbol{D}_t\widehat{\boldsymbol m}_t \big\rangle
\ge h_1 \|\nabla J_t(\boldsymbol{V}_t)\|_F^2,\\
&\|\boldsymbol{D}_t\widehat{\boldsymbol m}_t\|_F^2
\le h_2 \|\nabla J_t(\boldsymbol{V}_t)\|_F^2.
\end{aligned}
\end{equation}
Substituting~\eqref{eq:adam-inner-bounds} into~\eqref{eq:adam-smooth-step} gives
\begin{equation}\label{ineqadam}
J_t(\boldsymbol{V}_{t+1})-J_t(\boldsymbol{V}_t)
\le
-\tilde{\eta}_t
\Big(h_1-\frac{l(J_t(\boldsymbol{V}_t))}{2}h_2\tilde{\eta}_t\Big)
\|\nabla J_t(\boldsymbol{V}_t)\|_F^2.
\end{equation}
By absorbing $h_1$ and $h_2$ into the projected gradient-dominance modulus and smoothness constant, \eqref{ineqadam} yields the ADAM analogue of the one-step descent inequality used in the GD-DeePO analysis. The remainder of the regret proof then follows exactly as in the proof of Theorem~\ref{thm:SGD}, together with the covariance-mismatch term $2P_{\max}\bar{\delta}_{\mathrm{noise}}$. This gives~\eqref{eq:regret-adam}.

\section{Additional mechanism validation on a three-dimensional benchmark}
\label{app:3dbenchmark}

This appendix reports an additional low-dimensional benchmark study to illustrate two supporting mechanisms of the proposed method, namely disturbance-covariance estimation and adaptive gradient updates. It is not intended to represent the scale of industrial DHSs, but rather to provide a simple setting in which these effects can be isolated. We consider the marginally unstable Laplacian system from~\cite{dean2020sample,zhao2025data}, with $\boldsymbol{A}=\begin{bmatrix}1.01 & 0.01 & 0 \\ 0.01 & 1.01 & 0.01 \\ 0 & 0.01 & 1.01\end{bmatrix}$. The state and control weighting matrices are $\boldsymbol{Q}=\boldsymbol{R}=\boldsymbol{I}_3$. To evaluate DeePO under stochastic disturbances, we consider Gaussian process noise $\boldsymbol{w}_t \sim \mathcal{N}(0,\tfrac{1}{100}\boldsymbol{I}_3)$ and apply a PE input $\boldsymbol{u}_t=\boldsymbol{K}_t\boldsymbol{x}_t+\boldsymbol{v}_t$, where $\boldsymbol{v}_t \sim \mathcal{N}(0,\boldsymbol{I}_3)$, yielding $\mathrm{SNR}\in[0,5]$ as in~\cite{zhao2025data}. The DeePO algorithm is initialized using an LQR controller designed for a perturbed model $\boldsymbol{A}_{\mathrm{design}}=(1+\delta_e)\boldsymbol{A}$, where $\delta_e$ denotes the percentage model mismatch, and a 50-step warm-start dataset is collected.

\subsection{Effect of disturbance-covariance estimation}

To assess the effect of disturbance-covariance estimation, we test four input matrices with increasing coupling levels: decoupled $\boldsymbol{B}_{w1}=\begin{bmatrix}1&0&0\\0&1&0\\0&0&1\end{bmatrix}$, moderately coupled $\boldsymbol{B}_{w2}=\begin{bmatrix}1&1&0\\0&1&0\\0&0&1\end{bmatrix}$, strongly coupled $\boldsymbol{B}_{w3}=\begin{bmatrix}1&1&0.5\\0&1&0\\0&0&1\end{bmatrix}$, and highly coupled $\boldsymbol{B}_{w4}=\begin{bmatrix}1&1&1\\0&1&0\\0&0&1\end{bmatrix}$. A $10\%$ model perturbation is used to initialize a stabilizing gain $\boldsymbol{K}$.
We compare DeePO with $\boldsymbol{U}_\epsilon=\boldsymbol{I}$ against DeePO using the estimated disturbance covariance $\hat{\boldsymbol{U}}_\epsilon$ across all four input matrices, as shown in Figure~\ref{fig:3destimationdeepo}. Although the optimal LQR gain $\boldsymbol{K}^\star$ is theoretically independent of $\boldsymbol{U}_\epsilon$, the surrogate gradient used by DeePO depends on the closed-loop covariance. Consequently, assuming $\boldsymbol{U}_\epsilon=\boldsymbol{I}$ distorts the effective cost landscape when disturbances are anisotropic. In contrast, using $\hat{\boldsymbol{U}}_\epsilon$ provides more accurate gradient directions and leads to smoother and more stable convergence. This advantage becomes increasingly pronounced from $\boldsymbol{B}_{w1}$ to $\boldsymbol{B}_{w4}$ as the disturbance coupling strength increases.

\vspace{-0.3cm}
\begin{center}
  \includegraphics[width=\linewidth]{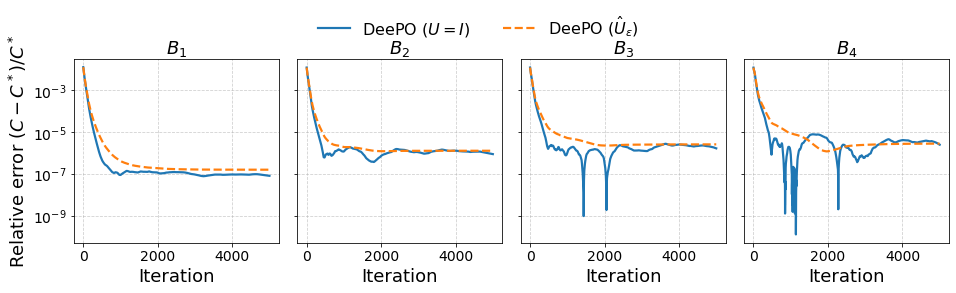}
  \vspace{-0.6cm}
  \captionof{figure}{Effect of disturbance-covariance estimation.}
  \label{fig:3destimationdeepo}
\end{center}
\vspace{-0.5cm}

\subsection{Comparison with zeroth-order policy optimization (ZO-PO)}

We further compare DeePO with classical zeroth-order policy optimization (ZO-PO) on the strongly coupled case $\boldsymbol{B}_{w3}$, using the same initial stabilizing controller and disturbance level ($\sigma_w=0.01$). ZO-PO estimates gradients by two-point finite differences, whereas DeePO updates the policy directly from closed-loop covariance data. Figure~\ref{fig:3destimationzero}(a) shows that DeePO converges faster and more smoothly, while ZO-PO is slower and more oscillatory due to noisy gradient estimates. Figure~\ref{fig:3destimationzero}(b) further shows that ZO-PO requires about 3000 samples per update, whereas DeePO uses only one, giving DeePO much higher sample efficiency.

\vspace{-0.2cm}
\begin{center}
    \centering
    \includegraphics[width=1.0\linewidth]{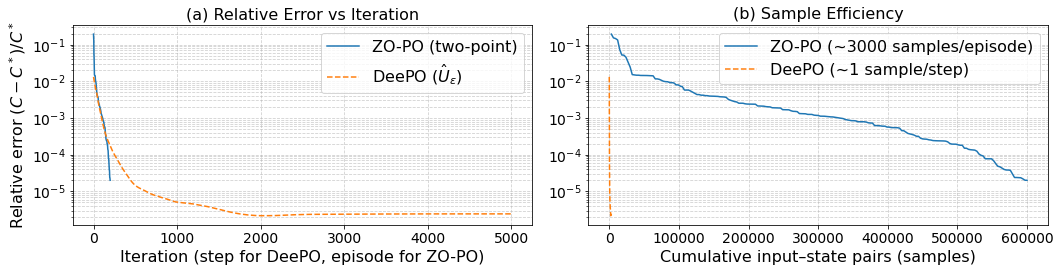}
    \vspace{-0.5cm}
    \captionof{figure}{Sample efficiency of DeePO and ZO-PO.}
    \label{fig:3destimationzero}
\end{center}
\vspace{-0.8cm}

\section*{Data availability}
The data and code that support the findings of this study are available from the corresponding author upon reasonable request.

\bibliographystyle{cas-model2-names}
\bibliography{cas-refs}

\end{document}